\newcommand{\ifarxiv}[2]{#2}
\renewcommand{\ifarxiv}[2]{#1}
\newcommand{\todo}[1]{\typeout{TODO: \the\inputlineno: #1}\textbf{{\color{red}[[[ #1 ]]]}}}
\def\ID{{\mathfrak{I}}}
\def\dep{{\mathsf{depth}}}
\def\dtv{{d_{\mathrm{TV}}}}
\def\ID{{\mathcal{I}}}
\newtheorem{theorem}{Theorem}[section]
\newtheorem*{claim*}{Claim}
\newtheorem{lemma}[theorem]{Lemma}
\newtheorem{proposition}[theorem]{Proposition}
\newtheorem{corollary}[theorem]{Corollary}
\theoremstyle{definition}
\newtheorem{condition}{Condition}
\newtheorem{definition}[theorem]{Definition}
\newtheorem{remark}[theorem]{Remark}
\newtheorem*{remark*}{Remark}
\newtheorem*{case*}{Case}
\newcommand{\CurConfig}[3]{{#1}^{({#2})}_{{#3}}}
\newcommand{\Sample}{\mathsf{Sample}}
\newcommand{\Sampler}{\mathsf{ConsistSampler}}
\newcommand{\RandSeed}{\mathcal{R}}
\newcommand{\InfMat}{\boldsymbol{\rho}}
\newcommand{\SampleMat}{\mathcal{S}}
\newcommand{\Inf}[2]{\InfMat\left({#1},{#2}\right)}
\newcommand{\CONGEST}{$\mathsf{CONGEST}$}
\newcommand{\PRAM}{$\mathsf{PRAM}$}
\newcommand{\CRAM}{$\mathsf{CRCW}$-$\mathsf{PRAM}$}
\newcommand{\NC}{$\mathsf{NC}$}
\newcommand{\RNC}{$\mathsf{RNC}$}
\renewcommand{\prec}{\rightarrow}
\title{Parallelize Single-Site Dynamics up to Dobrushin Criterion}
\date{}
\author{Hongyang Liu~\thanks{State Key Laboratory for Novel Software Technology,  New Cornerstone Science Laboratory, Nanjing University,  China. E-mails: \url{liuhongyang@smail.nju.edu.cn}, \url{yinyt@nju.edu.cn}} 
\and Yitong Yin\footnotemark[1]
}
\begin{document}
\maketitle
\begin{abstract}
Single-site dynamics are canonical Markov chain based algorithms for sampling from high-dimensional distributions, such as the Gibbs distributions of graphical models.
We introduce a simple and generic parallel algorithm that faithfully simulates single-site dynamics.
Under a much relaxed, asymptotic variant of the $\ell_p$-Dobrushin's condition---where the Dobrushin's influence matrix has a bounded $\ell_p$-induced operator norm for an arbitrary $p\in[1, \infty]$---our algorithm simulates  $N$ steps of single-site updates within a parallel depth of $O\left({N}/{n}+\log n\right)$ on $\tilde{O}(m)$ processors, 
where $n$ is the number of sites and $m$ is the size of the graphical model.
For Boolean-valued random variables, if the $\ell_p$-Dobrushin's condition holds---specifically, if the $\ell_p$-induced operator norm of the Dobrushin's influence matrix is less than~$1$---the parallel depth can be further reduced to $O(\log N+\log n)$, achieving an exponential speedup.

These results suggest that single-site dynamics with near-linear mixing times can be parallelized into \RNC{} sampling algorithms, 
independent of the maximum degree of the underlying graphical model, 
as long as the Dobrushin influence matrix maintains a bounded operator norm. 
We show the effectiveness of this approach with \RNC{} samplers for the hardcore and Ising models within their uniqueness regimes, 
as well as an \RNC{} SAT sampler for satisfying solutions of CNF formulas in a local lemma regime. 
Furthermore, by employing non-adaptive simulated annealing, these \RNC{} samplers can be transformed into \RNC{} algorithms for approximate counting.

\end{abstract}
\setcounter{page}{0} 
\thispagestyle{empty} 
\vfill

\pagebreak


\section{Introduction}
Drawing random samples according to prescribed probability distributions is a fundamental computational problem. 
Historically, Monte Carlo simulations, which rely on random sampling, were among the earliest computer programs~\cite{metropolis1987beginning}. 
Today, sampling from high-dimensional distributions remains a central challenge across various fields of computer science and data science.

The Markov chain Monte Carlo  (MCMC) method is one of the primary methods for sampling.
A significant portion of Markov chains used for sampling from high-dimensional distributions belongs to the class of \emph{single-site dynamics}.
Let $V$ be a set of $n$ {sites}, and  $Q$ be a set of $q=|Q|$ {spins}.
Let $\mu$ be a distribution defined over all {configurations} $\sigma\in Q^V$.
A canonical Markov chain for sampling from $\mu$ is the following single-site dynamics, 
known as the \emph{Glauber dynamics} (also called the \emph{Gibbs sampler}  or \emph{heat-bath dynamics}):
\begin{itemize}
\item
Given the current configuration $\sigma \in Q^V$, a site $v \in V$ is picked uniformly at random, and its spin $\sigma_v$ is updated by drawing a new spin independently according to the marginal distribution $\mu_v^{\sigma_{V \setminus {v}}}$.
\end{itemize}
It is well known that the stationary distribution of this chain is $\mu$.
Furthermore, when $\mu$ is a {Gibbs distribution} of a graphical model defined by local constraints, 
there exists an underlying graph $G=(V,E)$ such that the marginal distributions $\mu_v^{\sigma_{V\setminus\{v\}}}$ depend only on $\sigma_{N_v}$, the values of $\sigma$ at the neighbors of $v$.

Abstractly, a single-site dynamics on a graph $G=(V,E)$ can be defined by a class of local update distributions $\{P_v^{\tau}\}$,
where each $P_v^\tau$ is a distribution over $Q$. 
This distribution is determined by the site $v\in V$  chosen for update and the current configuration $\tau=\sigma_{N_v^+}$ on $v$'s inclusive neighborhood $N_v^+$.\footnote{We use the inclusive neighborhood $N_v^+=\{u\mid\{u,v\}\in E\}\cup\{v\}$ in this abstraction of single-site dynamics because, in some single-site dynamics, the rule for updating a site $v$ may depend on the current spin of $v$ itself, e.g.~in Metropolis chains.}

\paragraph{Parallel MCMC sampling.}
MCMC sampling plays a pivotal role in  efficiently performing Monte Carlo calculations for a variety of complex tasks,
including volume estimation and integration~\cite{dyer1991random}, approximation of partition functions~\cite{vstefankovivc2009adaptive,jerrum1993polynomial},  permanent computation~\cite{jerrum2004polynomial}, and counting satisfying solutions~\cite{feng2021fast}.
These tasks are fundamental to statistical inference and data analysis, which have become major focuses in the era of big data.
As data scales, the demand for solving these tasks efficiently by leveraging parallel computing resources has grown significantly, 
making the parallelization of MCMC sampling increasingly critical. 
A substantial body of research, both in practice~\cite{jordan2019communication,niu2011hogwild,smola2010architecture,terenin2020asynchronous,sa2016ensuring, ahmed2012scalable, gonzalez2011parallel,aadit2022massively,scott2022bayes,daskalakis2018hogwild, neiswanger2013asymptotically,mahani2015simd} and in theory~\cite{carlson2023improved, AnariHLVXY23, AnariBTV23, feng2021distributed,feng2021dynamic,anari2021sampling,biswas2020local, guo2019uniform, feng2018local, fischer2018simple, daskalakis2018hogwild, feng2017sampling, sa2016ensuring, gonzalez2011parallel},
is dedicated to advancing parallel MCMC methods. 
This ongoing effort aims to address the challenge of efficiently conducting Monte Carlo calculations on massive datasets in contemporary computational environments.

The single-site dynamics are inherently defined in a sequential manner. 
This  raises intriguing theoretical questions about whether such sequentiality is intrinsic to the problems they solve.
In a seminal work on parallel computing, Mulmuley, Vazirani and Vazirani~\cite{mulmuley1987matching} asked whether an efficient parallel algorithm could exist for sampling bipartite perfect matchings, a key step towards approximating permanents in parallel.
However, Teng \cite{teng1995independent} later provided negative evidence to this question by identifying barriers to parallelizing Markov chains, particularly those related to perfect matchings.
Teng further conjectured that 0-1 permanents are not efficiently approximable in the \NC{} class,
making it a rare example of a problem believed to be intrinsically sequential but not known to be P-complete.
This conjecture, along with the barrier, reflects the belief that simulating a dynamical system may be inevitably sequential.

In fact, single-site dynamics originated as an idealized continuous-time parallel process~\cite{glauber1963time}.
\begin{center}
  \begin{tcolorbox}[=sharpish corners, colback=white, width=1\linewidth]
  	\begin{center}
	\textbf{\emph{The Continuous-Time Single-Site Dynamics}}
  	\end{center}
    The continuous-time process $(X_t)_{t\in\mathbb{R}_{\ge0}}$ on space $Q^V$ evolves as:
    \begin{itemize}
    \item each site $v\in V$ holds an independent rate-1 Poisson clock;
	\item when a clock at a site $v\in V$  rings, the spin $X_t(v)$ is redrawn  according to $P_v^{X_t(N_v^+)}$.
    \end{itemize}
  \end{tcolorbox} 
\end{center}

This continuous-time process effectively parallelizes the single-site dynamics, achieving a speedup factor of $n = |V|$. 
This is because a continuous time $T \in \mathbb{R}_{\ge0}$ corresponds to $N$ discrete steps, where $N \sim \text{Pois}(nT)$.

Although this idealized parallel process has been known for over half a century,
little is known about how to implement it correctly and efficiently on a parallel computer. 
A major obstacle is a classical conundrum in concurrency:
if two adjacent sites concurrently update their spins based on the current configurations of their neighborhoods, 
it can lead to an incorrect simulation of the original continuous-time process, where updates are atomic.
On the other hand, avoiding this issue by disallowing concurrent updates of adjacent sites incurs an extra time complexity factor of the degree $\Delta_G$ (or more precisely, the chromatic number $\chi_G$) of the graphical model $G$, since at most one site per neighborhood can be updated at any moment.
This obstacle is encountered by traditional parallel samplers based on chromatic scheduler~\cite{gonzalez2011parallel}.
For general graphical models $G$ with $n$ sites, 
the maximum degree $\Delta_G$ or the chromatic number $\chi_G$ can be as large as $\Omega(n)$, 
making it challenging to simulate MCMC sampling using \RNC{} programs within polylogarithmic depth of parallel computations.

We are therefore concerned with the following fundamental question:
Can we simulate continuous-time single-site dynamics both faithfully and efficiently on parallel computers?
The question concerns simulating an idealization of our physical world~\cite{glauber1963time} on man-made computing systems with minimal errors or overheads. 
This goal appears ambitious, especially considering the barriers identified in~\cite{teng1995independent}. 
Recently, some positive results have been shown for a subclass of single-site dynamics known as Metropolis chains~\cite{feng2021distributed}. 
However, this approach was specifically tailored to these chains, where updates can be reduced to coin flipping.
For general single-site dynamics, particularly Glauber dynamics, where updates involve more complex decisions beyond Boolean choices, it remains largely uncertain whether these processes can be efficiently and faithfully simulated in parallel.

\subsection{Our Results}
We present a simple and generic parallel algorithm that can faithfully simulate single-site dynamics.

\paragraph{Dobrushin's criterion.}
The parallel simulation is efficient under a condition formulated similarly to Dobrushin's conditions, but is significantly weaker in scale.
\begin{definition}[Dobrushin's influence matrix~\cite{dobrushin1970prescribing} ]\label{def:dobrushin-matrix}
The \emph{Dobrushin's influence matrix}  $\InfMat\in\mathbb{R}_{\ge 0}^{V\times V}$  for the single-site dynamics on the graph $G=(V,E)$ specified by $\{P_v^{\tau}\}$ is defined as:
\begin{align}
\forall u,v\in V:\quad  \Inf{u}{v}\triangleq \max_{\tau,\tau':\tau\oplus\tau'\subseteq\{u\}} \dtv\left(P_v^{\tau},P_v^{\tau'}\right),\label{eq:dobrushin-matrix}
\end{align}
where 
 $\dtv(\cdot,\cdot)$ denotes the total variation distance.
  The maximum is taken over all pairs of configurations  $\tau,\tau'\in Q^{N_v^+}$ on the inclusive neighborhood $N_v^+=N_v\cup\{v\}$ of $v$,  such that $\tau$ and $\tau'$ agree with each other everywhere except at $u$.
\end{definition}

For distinct non-adjacent $u,v\in V$, the maximum in~\eqref{eq:dobrushin-matrix} is actually taken over all $(\tau,\tau')$ that $\tau=\tau'$, and hence $\Inf{u}{v}=0$ for such pairs of $u,v$.
For instance, in the case of Glauber dynamics, it holds that $\Inf{v}{v}=0$; however, in general, the diagonal entries of $\rho$ can be non-zero, as seen in the Metropolis chain.

The $\ell_p$-induced operator norm of Dobrushin's influence matrix is defined by: 
\[
\|\InfMat\|_p\triangleq\sup_{v\neq 0}\frac{\|\InfMat {v}\|_p}{\|{v}\|_p}.
\]
We define the following condition for the operator norm of Dobrushin's influence matrix to be arbitrarily bounded, but not necessarily by a constant:
\begin{condition}
\label{cond:main}
There is a $p\in[1,\infty]$ such that 
$\|\InfMat\|_p\le \rho$ for some $\rho>0$.
\end{condition}

It is well known that contraction of Dobrushin's influence matrix in operator norms implies the mixing of the chain.  
Specifically, $\|\InfMat\|_1=\max_{v}\sum_{u}\Inf{u}{v}<1$ corresponds to the Dobrushin condition~\cite{dobrushin1970prescribing}, and $\|\InfMat\|_\infty=\max_{u}\sum_{v}\Inf{u}{v}<1$ corresponds to the Dobrushin-Shlosman condition~\cite{dobrushin1985completely,dobrushin1985constructive}, both of which imply optimal mixing times.
In a seminal work~\cite{hayes2006simple}, the $\ell_2$-Dobrushin's condition $\|\InfMat\|_2<1$ was studied,
which can be related to several key approaches for mixing, including coupling~\cite{hayes2006simple}, spectral gap~\cite{eldan2021spectral}, log-Sobolev~\cite{marton2019logarithmic} and modified log-Sobolev inequalities~\cite{blanca2022mixing}.
For spin systems, this condition was generalized by Dyer, Goldberg, and Jerrum to $\|\InfMat\|<1$ for arbitrary matrix norms~\cite{dyer2009matrix}.

\paragraph{Main Theorems.}
Let $(X_t)_{t\in\mathbb{R}_{\ge0}}$ denote a continuous-time single-site dynamics on the state space $Q^V$, 
defined by local update distributions ${P_v^{\tau}}$ specified on the graph $G = (V, E)$.
The single-site dynamics are presented to the algorithm as oracles for evaluating the distributions $P_v^{\tau}$. 
Specifically, given any $v \in V$, $\tau \in Q^{N_v^+}$, and $x \in Q$, the oracle returns the probability value $P_v^{\tau}(x)$. 
We present a parallel algorithm in \CRAM{} for simulating single-site dynamics.

\begin{theorem}[informal]\label{main-thm:parallel} 
For any single-site dynamics on a graph $G=(V,E)$ that satisfies \Cref{cond:main} with parameter~$\rho$,
there exists a parallel algorithm that can simulate the continuous-time chain $(X_t)_{0\le t\le T}$ from any initial state $X_0\in Q^V$ up to any time $0 < T < \mathrm{poly}(|V|)$,
with ${O}\left(\rho\cdot (T+\log|V|)\right)$ depth on $\tilde{O}\left(|E|+|V||Q|^2\right)$ processors with high probability.
\end{theorem}
The algorithm is a Las Vegas algorithm that guarantees to return a {faithful} copy of $X_T$ upon termination, while the complexity bounds hold with high probability ($1-|V|^{-O(1)}$). 
The $\tilde{O}(\cdot)$ notation hides poly-logarithmic factors.

Furthermore, when a stronger Dobrushin condition is satisfied, 
essentially the same parallel algorithm for perfect simulation can terminate much faster, 
achieving an exponential speedup. This is showcased by the following theorem.
\begin{theorem}[informal]\label{main-thm:two-spin} 
For any single-site dynamics on graph $G=(V,E)$ with domain size $|Q|=2$ satisfying \Cref{cond:main} with parameter $\rho<1$,
there exists a parallel algorithm that can simulate the continuous-time chain $(X_t)_{0\le t\le T}$ from any initial state $X_0\in Q^V$ up to any time $0 < T < \mathrm{poly}(|V|)$,
with $O(\frac{1}{1-\rho}(\log T+\log |V|))$ depth on $O\left(|E|\cdot T\right)$ processors with high probability.
\end{theorem}

\begin{remark}
We note that when applying the parallel simulation for the purpose of drawing approximate samples, the exponential speedup achieved in \Cref{main-thm:two-spin} may not offer a significant advantage over \Cref{main-thm:parallel}.
This is because, as show in~\cite{dyer2009matrix}, for spin systems, under the condition $\|\InfMat\|_p < 1$, the continuous chain mixes within time $T=O(\log n)$. 
With such $T$, the complexity bounds in \Cref{main-thm:parallel,main-thm:two-spin} are essentially the same.

Nevertheless, we still present \Cref{main-thm:two-spin} separately,
because it holds more generally beyond spin systems, showing that super-linear speedup is achievable for perfect simulation of single-site dynamics, which is conceptually significant.
\end{remark}

The parallel algorithm described above can also be implemented in the \CONGEST{} model, providing an efficient distributed algorithm for simulating single-site dynamics.

\begin{theorem}[informal]\label{main-thm:distributed}
For any single-site dynamics on a graph $G=(V,E)$ that satisfies \Cref{cond:main} with parameter~$\rho$,
there exists a distributed algorithm that can simulate the continuous-time chain $(X_t)_{0\le t\le T}$ from any initial state $X_0\in Q^V$ up to any time $0 < T < \mathrm{poly}(|V|)$,
within ${O}\left(\rho\cdot (T+\log |V|)\right)$ rounds on the network $G$, using messages each of $O(\log |V|\cdot \log |Q|)$ bits.
\end{theorem}
In the \CONGEST{} model, where each node has unlimited computational power and local memory, the update distributions $P_v^\tau$ can be provided to each node $v$ as local input.
The algorithm is a Monte Carlo algorithm that terminates in a fixed number of rounds and succeeds with high probability.

\paragraph{Useful Consequences.}
Note that in the above theorems, 
$T$ represents the continuous-time duration, 
which corresponds to $N$ discrete update steps where $N \sim \mathrm{Pois}(nT)$, with $n = |V|$. 
According to a generic lower bound~\cite{hayes2007general}, 
single-site dynamics requires $\Omega(n \log n)$ discrete steps to mix. 
Consequently, when applied to such mixing chains, the above algorithms achieve a parallel speedup of $\Theta(n)$.

For example, for any \emph{reversible} and \emph{ergodic} (irreducible and aperiodic) single-site dynamics, if the Dobrushin-Shlosman  condition  $\|\rho\|_\infty<1$ holds (which implies contraction of path coupling), 
the chain mixes within $n^{-O(1)}$ total variation distance to the stationary distribution $\pi$ in $O(n\log n)$ steps or $O(\log n)$ continuous time.
Combined with \Cref{main-thm:parallel},
this leads to the following straightforward corollary for MCMC sampling in the \RNC{} class (problems solvable by randomized parallel algorithms with poly-logarithmic depth and polynomial processors).

\begin{corollary}\label{cor:general}
For any reversible and ergodic single-site dynamics with \RNC{}-computable initial states and local update distributions, if the Dobrushin’s influence matrix satisfies $\|\InfMat\|_\infty < 1$, there exists an \RNC{} algorithm that can draw approximate samples (within any $n^{-O(1)}$ total variation distance) from the stationary distribution~$\pi$.
\end{corollary}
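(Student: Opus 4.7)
The plan is to apply \Cref{main-thm:parallel} at the continuous-time mixing time of the chain. The Dobrushin-Shlosman hypothesis $\|\InfMat\|_\infty<1$, read asymptotically as $\|\InfMat\|_\infty\le 1-\epsilon$ for a constant $\epsilon>0$, is exactly the contraction condition needed by path coupling with the natural greedy couplings of the single-site distributions $\{P_v^\tau\}$. A short calculation shows that for two configurations differing at a single site $u$, updating a uniformly random site $v$ leaves the expected Hamming distance at $\rho(u,u)$ if $v=u$ and at $1+\rho(u,v)$ if $v\ne u$, so averaging over $v$ gives the one-step bound $1-(1-\|\InfMat\|_\infty)/n\le 1-\epsilon/n$. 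Path coupling extends this to arbitrary pairs, and passing to the continuous-time generator yields $\E[d(X_T,Y_T)]\le n\cdot e^{-\epsilon T}$. Choosing $T=O(\log n/\epsilon)=O(\log n)$ drives this below $n^{-c}$ for any desired constant $c$, so the continuous-time state $X_T$ is within $n^{-O(1)}$ total variation distance of the stationary distribution $\pi$.

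Since $\|\InfMat\|_\infty\le 1$ is a constant bound, \Cref{cond:main} is satisfied with $p=\infty$. Invoking \Cref{main-thm:parallel} at the above $T=O(\log n)$ produces a faithful sample of $X_T$ in depth $O(T+\log n)=O(\log n)$ using $\tilde{O}(|E|+n|Q|^2)=\mathrm{poly}(n)$ processors, with the claimed complexity bounds holding with probability $1-n^{-O(1)}$. Composing faithfulness with the mixing bound above gives $\dtv(X_T,\pi)\le n^{-O(1)}$, i.e.\ an approximate sample of $\pi$ at the required accuracy.

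To place the whole procedure in \RNC{}, I would use the stated assumptions that the initial state $X_0$ and the oracle evaluations $P_v^\tau(x)$ are \RNC{}-computable: these slot into the algorithm of \Cref{main-thm:parallel} without affecting the asymptotic depth or the polynomial processor count. The step I expect to deserve the most care — though it is still standard in the mixing-time literature — is the path-coupling argument under $\|\cdot\|_\infty$: one should verify that the contraction passes cleanly from the discrete chain to the continuous-time process via the infinitesimal generator, so that the $O(\log n)$ continuous-time mixing bound is obtained directly rather than via a separate Poisson concentration on the number of discrete updates. Everything else is a direct composition with \Cref{main-thm:parallel} and the \RNC{}-computability hypotheses.
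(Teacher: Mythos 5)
Your proposal is correct and follows the same route the paper sketches: interpret $\|\rho\|_\infty<1$ as a constant gap, observe that this is exactly the per-update path-coupling contraction $1-(1-\|\rho\|_\infty)/n$, deduce $O(\log n)$ continuous mixing time, check that $\|\rho\|_\infty\le 1$ trivially gives \Cref{cond:main} with $p=\infty$, and then invoke \Cref{main-thm:parallel} at $T=O(\log n)$. The only cosmetic difference is that you derive the $O(\log n)$ continuous mixing bound directly from the infinitesimal generator, whereas the paper first states the $O(n\log n)$ discrete bound and converts it via the generic relation \eqref{eq:continuous-discrete-mixing}; both are standard and equivalent.
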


One can replace the above condition $\|\InfMat\|_\infty<1$ in \Cref{cor:general} with any sufficient condition that implies both $\tilde{O}(n)$ mixing time and \Cref{cond:main} with $\rho=O(1)$. In such cases, the corollary still holds.
The same applies to any sufficient condition that implies both $\mathrm{poly}(n)$ mixing time and \Cref{cond:main} with $\rho<1$.
However, in many cases, especially for spin systems, the mixing time bound may already be implied by \Cref{cond:main} with $\rho < 1$~\cite{dyer2009matrix}.

\begin{remark}
\Cref{cond:main} with $\rho=O(1)$ is significantly weaker than known Dobrushin conditions for mixing or known sufficient conditions for efficient parallelizing single-site dynamics. 
For example, \Cref{cond:main} with $\rho=O(1)$ holds if a weakened Dobrushin-Shlosman condition, $\|\rho\|_\infty=O(1)$, is satisfied.  
This condition essentially means that  the discrepancy in path coupling can grow by at most an $O(1)$ factor per step,
implying that ``disagreements do not propagate super-exponentially.''

For various graphical models, such as the Ising model, hardcore model, and proper coloring on graphs with bounded maximum degree, this condition is no stronger than the necessary conditions for the chains to mix.
Moreover, it is much weaker than the Lipschitz condition in \cite{feng2021distributed} for parallelizing Metropolis chains, 
which would imply that $\|\InfMat\|_1 = O(1)$ for the influence matrix $\InfMat$ of the Metropolis chain.
\end{remark}

\paragraph{Hardcore and Ising samplers.}
Graphical models can represent high-dimensional (Gibbs) distributions using local factors.
Let $G=(V,E)$ be an undirected graph. Let $\lambda>0$ be the fugacity (or external field, vertex activity) and $\beta>0$ be the temperature (or edge activity).
The hardcore Gibbs distribution, denoted $\mu^{\sf{hardcore}}$, is defined over all $\sigma\in\{0,1\}^V$ indicating independent sets in $G$ as follows:
\[
\forall \text{ independent set }\sigma\in\{0,1\}^V,\quad \mu^{\sf{hardcore}}(\sigma)\propto\lambda^{\|\sigma\|_1}.
\]
The Ising Gibbs distribution, denoted $\mu^{\sf{Ising}}$, is defined over all $\sigma\in\{0,1\}^V$ as follows:
\[
\forall \sigma\in\{0,1\}^V,\quad \mu^{\sf{Ising}}(\sigma)\propto\beta^{m(\sigma)}\lambda^{\|\sigma\|_1},
\]
where $m(\sigma)\triangleq\sum_{\{u,v\}\in E}I[\sigma_u=\sigma_v]$ counts the number of monochromatic edges in $\sigma$.

The normalizing factors of these Gibbs distributions are known as the \emph{partition functions}, which are essential for counting algorithms.
For graphs $G$ with maximum degree $\Delta=\Delta_G\ge 3$, 
the  \emph{uniqueness conditions} for the hardcore and Ising models are respectively given by $\lambda<\lambda_c(\Delta)\triangleq\frac{(\Delta-1)^{\Delta-1}}{(\Delta-2)^{\Delta}}$ and $\beta\in\left(\frac{\Delta-2}{\Delta},\frac{\Delta}{\Delta-2}\right)$.
Beyond these regimes, either the single-site dynamics is slow mixing or the sampling problem itself becomes computationally intractable~\cite{gerschcnfeld7reconstruction,sly2014counting,galanis2016inapproximability}.

Under the uniqueness condition,
optimal $O(n\log n)$ mixing times for the Glauber dynamics have been established for these models through a series of breakthroughs~\cite{chen2022optimal, chen2022localization, anari2022entropic, chen2021rapid, chen2021optimal, chen2020rapid, anari2020spectral}.
Additionally,  for the hardcore model with $\lambda=O\left(\frac{1}{\Delta}\right)$ and the Ising model with $\beta\in 1\pm O\left(\frac{1}{\Delta}\right)$,
\Cref{cond:main} holds with $\rho=O(1)$.
Consequently, we obtain the following corollary.
\begin{corollary}[Hardcore and Ising samplers]
\label{cor:Ising}
For graphs $G=(V,E)$ with maximum degree $\Delta=\Delta_G\ge 3$, 
there exist parallel samplers for the Gibbs distributions of the following models:
\begin{itemize}
    \item hardcore model with fugacity $\lambda\le(1-\delta)\lambda_c(\Delta)=(1-\delta)\frac{(\Delta-1)^{\Delta-1}}{(\Delta-2)^{\Delta}}$ for $\delta\in(0,1)$,
    \item Ising model with temperature $\beta\in\left(\frac{\Delta-2+\delta}{\Delta-\delta},\frac{\Delta-\delta}{\Delta-2+\delta}\right)$  for $\delta\in(0,1)$,
\end{itemize}
such that an approximate sample within $\frac{1}{\mathrm{poly}(|V|)}$ total variation distance from the Gibbs distribution can be obtained within  $O_\delta(\log\Delta\cdot \log |V|)$ depth on $\tilde{O}(|E|+|V|)$ processors.
\end{corollary}

To the best of our knowledge, these are the first \RNC{} samplers for any graphical model with unbounded maximum degree up to the critical condition.
Similar results can be obtained for general two-state anti-ferromagnetic spin systems under the critical conditions characterized in~\cite{chen2022optimal}.

The parallel samplers can be implemented as ${O}(\log n)$-round \CONGEST$(\log n)$ algorithms on the network $G$, where the local update distributions can be evaluated using 1-round communication.

The parallel samplers in \Cref{cor:Ising} can be transformed into parallel approximate counting algorithms using {non-adaptive} simulated annealing.
By applying the Chebyshev's cooling schedule from \cite{dyer1991random} and \cite[Corollary 2.4]{vstefankovivc2009adaptive}, the hardcore and Ising samplers in \Cref{cor:Ising} can be transformed into ${O}(\log n\cdot\log\Delta)$-depth algorithms on $\tilde{O}(mn^2/\epsilon^2)$ processors for $\epsilon$-approximations of the hardcore and Ising partition functions, under the same respective uniqueness conditions.
To the best of our knowledge, these are the first  \RNC{} approximation algorithms for the partition function of any graphical model with unbounded maximum degree.

\paragraph{SAT sampler.}
Consider conjunctive normal form (CNF) formulas over $n$ Boolean variables. 
A CNF formula  $\Phi$  is called a  \emph{$(k,d)$-formula} if all of its clauses have size $k$ and each variable appears in at most $d$ clauses.
In \cite{feng2021fast,moitra2019approximate}, a $\mathrm{poly}(d,k)\cdot n^{1+2^{-20}}$-time algorithm is presented for sampling almost uniform satisfying solutions, under a local lemma condition $k \geq 20\log k + 20\log d + 60$.
Here we show that this algorithm  can be parallelized within the \RNC{} class.

\begin{corollary}[SAT sampler]
\label{cor:CNF}
There is a parallel algorithm such that
for any $(k,d)$-formula $\Phi$ satisfying 
\begin{align}
k \geq 20\log k + 20\log d + 60,\label{eq:LLL-cond}
\end{align}
the algorithm outputs an almost uniform satisfying assignment for $\Phi$ within $\frac{1}{\mathrm{poly}(n)}$ total variation distance.
This is achieved using  $\mathrm{polylog}(n)$ depth on $\mathrm{poly}(d, n)$ processors, where $n$ is the number of variables in $\Phi$.
\end{corollary}
The Glauber dynamics is implemented on a  more complicated joint distribution $\mu_M$,  constructed by projecting the uniform distribution $\mu$ over all SAT solutions of $\Phi$  onto a carefully chosen subset of variables $M\subset V$. 
This projected Markov chain was introduced because the original solution space may be disconnected through local Markov chains.
Tighter analyses in the follow-up works \cite{FHY21,JPV21,HSW21} have proved the rapid mixing and efficient simulations of the projected Markov chains under weaker local lemma conditions, with improved constants, than the one in~\eqref{eq:LLL-cond}.
Similar improved bounds can also be achieved by the \RNC{} samplers using more involved analyses.
Similarly, by the non-adaptive annealing provided in \cite[Algorithm 7]{feng2021fast}, we obtain a $\mathrm{polylog}(n)$-depth  $\mathrm{poly}(d, n)$-processor algorithm for approximately counting the number of satisfying solutions of any $(k,d)$-CNF satisfying \eqref{eq:LLL-cond}.

Proofs of \Cref{cor:Ising} and \ref{cor:CNF} are straightforward and provided in \Cref{sec:applications} for completeness.

\subsection{Related Work in Correlated Sampling}

A key step in our algorithm for parallel simulation of single-site dynamics relies on generating each single-site update with a ``universal coupling'' of randomness (formally defined in~\Cref{def:universal-coupling}),
which synchronizes all local update distributions $P_v^\tau$ across different neighborhood configurations $\tau$.
While this coupling does not change the definition or sequential simulation of the Markov chain, it is essential for achieving efficient parallel simulation while maintaining the accuracy of the simulation. 
Notably, it overcomes the barrier identified in~\cite{teng1995independent}.

Perhaps due to its natural definition, 
this machinery of universally coupling distributions over the same sample space 
has actually been explored in various independent contexts to solve diverse problems.
These include MinHash sketching~\cite{broder1997resemblance,charikar2002similarity}, rounding linear programming relaxations~\cite{kleinberg2002approximation}, parallel repetition of 2-player 1-round games~\cite{holenstein2007parallel,rao2008parallel,barak2008rounding}, cryptography~\cite{rivest2016symmetric}, and replicability and differential privacy of learning~\cite{bun2023stability,ghazi2021user,kalavasis2023statistical,karbasi2023replicability}.

The problem is now commonly referred as \emph{correlated sampling} and was formalized in \cite{bavarian2020optimality} as follows.

\begin{definition}[correlated sampling]
A correlated sampling strategy for a finite sample space $\Omega$ with error rate $\epsilon:[0,1]\to [0,1]$ is a procedure $\Sample: \Delta(\Omega)\times [0,1] \to \Omega$, such that
\begin{itemize}
    \item \textbf{Correctness:} $\quad \forall p\in \Delta(\Omega), x\in \Omega$, $\Pr_{\RandSeed  \sim [0,1]}\left[\Sample(p,\RandSeed) = x\right] = p(x)$;
    \item \textbf{Error rate:} $\quad \forall p,q\in \Delta(\Omega) \text{ with } \dtv(p,q) = \delta$,  $\Pr_{\RandSeed  \sim [0,1]}\left[\Sample(p,\RandSeed) \neq \Sample(q,\RandSeed)\right] \leq \epsilon(\delta)$.
\end{itemize}
In above, $\Delta(\Omega)\triangleq \left\{p\in[0,1]^\Omega \bigm{|} \mbox{$\sum_{x\in\Omega}p(x)=1$}\right\}$ denotes the probability simplex on the sample space $\Omega$.
\end{definition}

In particular, the correlated sampling strategy used in our paper achieves an optimal error rate $\epsilon(\delta) = 2\delta/(1+\delta)$, 
and was independently discovered in \cite{kleinberg2002approximation} and \cite{holenstein2007parallel}.
This will be discussed in detail in \Cref{sec:universal-coupling}.

To the best of our knowledge, the present work is the first to apply the correlated sampling method to parallelize stochastic processes.
Very recently, Anari, Gao, and Rubinstein~\cite{anari2024parallel} generalized and applied the correlated sampling approach to derive a  parallel sampler for arbitrary high-dimensional distributions, achieving sub-linear depth and polynomial total work, assuming the availability of a counting oracle.


\section{Preliminaries}

\paragraph{Single-site dynamics.}
Let $V$ be a set of $n$ \emph{sites}, and let $Q=\{1,2,\ldots,q\}$ be a finite set of $q\ge 2$ \emph{spins}.
A \emph{configuration} on $S\subseteq V$ is an assignment $\sigma\in Q^S$ of spins to the sites in $S$.

Let $G=(V,E)$ be an undirected graph on vertex set $V$.
For each vertex $v\in V$, let $N_v\triangleq\{u\in V\mid \{u,v\}\in E\}$ denote the neighborhood of $v$ in $G$, and let $N_v^+\triangleq N_v\cup\{v\}$ denote the \emph{inclusive neighborhood} of $v$ in $G$.

We use the term \emph{single-site dynamics} on graph $G$ to refer to a specific type of Markov chain on space $Q^V$.
Let $\{P_v^\tau\}$ be a collection of \emph{local update distributions}, 
such that for every site $v\in V$ and every configuration $\tau\in N_v^+$ on $v$'s inclusive neighborhood, $P_v^\tau$ is a distribution over all spins in $Q$.
A Markov chain on space $Q^V$ is specified by $\{P_v^\tau\}$.
At the current configuration $\sigma\in Q^V$, the chain transitions as follows:
\begin{enumerate}
\item
Pick $v\in V$ uniformly at random.
\item
Replace the spin $\sigma(v)$ of $v$ with a spin drawn independently according to $P_v^{\tau}$, where $\tau=\sigma(N_v^+)$.
\end{enumerate}
The collection $\{P_v^\tau\}$ is also called the \emph{local transition rule} for the chain.

\paragraph{Examples.}
Given a distribution $\mu$ over $Q^V$, for any $v\in V$ and any possible $\tau\in Q^{S}$ where $S\subseteq V\setminus\{v\}$, 
let  $\mu_v^{\tau}$ denote the 
marginal distribution at $v$ induced by $\mu$, 
given the configuration  on $S$ being fixed as $\tau$. Formally, for each $x\in Q$,
\[
\mu_v^{\tau}(x)=\Pr_{\sigma\sim\mu}[\sigma_v=x\mid \sigma_S=\tau].
\]

Consider a \emph{Gibbs distribution} $\mu$ defined on graph $G$ over sample space $Q^V$,
such that for any $v\in V$ and any possible $\sigma\in Q^{V}$, the marginal distribution $\mu_v^{\sigma_{V\setminus\{v\}}}$ depends only on the values of $\sigma$ at $v$'s neighbors. That is, $\mu_v^{\sigma_{V\setminus\{v\}}}=\mu_v^{\sigma_{N_v}}$.
Examples include:
\begin{itemize}
\item \emph{Hardcore model:} Let $\lambda>0$. A Gibbs distribution $\mu$ is defined over all such $\sigma\in \{0,1\}^V$ such that $\mu(\sigma) \propto \lambda^{\|\sigma\|_1}$ if $\{v\in V: \sigma_v=1\}$ forms an independent set in $G$, and $\mu(\sigma)=0$ otherwise.
\item \emph{Ising/Potts/coloring models:} 
Let $\beta\ge 0$. A Gibbs distribution $\mu$ is defined over all $\sigma\in Q^V$, 
such that $\mu(\sigma) \propto \beta^{m(\sigma)}$,
where $m(\sigma)$ 
denotes the number of monochromatic edges in $\sigma$.
\end{itemize}
\noindent
Two basic classes of single-site dynamics are:
\begin{itemize}
\item \emph{Glauber dynamics}: (a.k.a.~\emph{Gibbs sampler}, \emph{heat-bath dynamics}):
The local update distributions are $P_v^\tau=\mu_v^{\tau_{N_v}}$.
\item \emph{Metropolis chain}: To update $v$'s spin in  $\sigma\in Q^V$, first propose to replace $\sigma(v)$ with a spin $x\in Q$ chosen uniformly at random, and then accept it with probability $\min\left\{1,{\mu_v^{\sigma_{N_v}}(x)}/{\mu_v^{\sigma_{N_v}}(\sigma_v)}\right\}$.
Formally, the local update distributions are defined as $P_v^\tau(x)=\frac{1}{|Q|}\min\left\{1,{\mu_v^{\tau_{N_v}}(x)}/{\mu_v^{\tau_{N_v}}(\tau_v)}\right\}$ for $x\neq \tau_v$ and $P_v^\tau(\tau_v)=1-\sum_{x\neq \tau_v}P_v^\tau(x)$.
\end{itemize}
It is well known that both chains are reversible with respect to the stationary distribution $\mu$~\cite{levin2017markov}.
These chains can also be defined for general (not necessarily Gibbs) distributions $\mu$ over $Q^V$ 
by setting $G$ as the complete graph, thus making the marginal distributions $\mu_v^{\tau_{N_v}}=\mu_v^{\tau_{V\setminus\{v\}}}$.

For Gibbs distributions $\mu$ defined by hard constraints, where infeasible configurations have zero probability,
one can routinely extend the definition of $\mu_v^{\tau}$ to infeasible conditions $\tau$ by ignoring the constraints violated locally by $\tau$. 
For instance, for the uniform distribution $\mu$ over proper $q$-colorings of $G$, if $\tau\in Q^{N_v}$ is already improper, $\mu_v^{\tau}$ can still be defined as the uniform distribution over all available colors in $Q\setminus\{\tau_u\mid u\in N_v\}$, provided $q>\Delta_G$.
This ensures that the single-site dynamics may absorb to feasible states even if starting from an infeasible one.

Therefore, in this paper, we assume that for a single-site dynamics the local update distributions $\{P_v^\tau\}$ are defined for all sites $v\in V$ and neighborhood configurations $\tau\in Q^{N_v^+}$.

\paragraph{The continuous-time chain.}
The continuous-time variant of the single-site dynamics is defined as follows:
\begin{enumerate}
\item
Time progresses continuously from 0,
with each site $v\in V$ associated with an independent rate-1 Poisson clock.
Let $S_i=\sum_{1\leq j\leq i}T_j$, where $T_1,T_2,\dots$ are independent and identically distributed exponential random variables of rate 1. 
The Poisson clock rings at time $S_1,S_2,\ldots$.
\item
Whenever the clock at a site $v\in V$ rings, the spin of $v$ is updated according to the local transition rule  $\{P_v^\tau\}$ as in the discrete-time chain.
\end{enumerate}

Let $(X_t^{\mathsf{C}})_{t\in\mathbb{R}_{\ge 0}}$ denote this continuous-time process, 
and let $(X_t^{\mathsf{D}})_{t\in\mathbb{N}_{\ge 0}}$ denote its discrete-time counterpart.
The two processes are equivalent up to a speedup factor  $n=|V|$. 
\begin{proposition}
\label{pro:poisson1}
Conditioning on the same initial configuration $X_0^{\mathsf{C}}=X_0^{\mathsf{D}}\in Q^V$, 
for any $T> 0$, 
$X_T^{\mathsf{C}}$ is identically distributed as $X_N^{\mathsf{D}}$, 
where $N$ follows the Poisson distribution $\mathrm{Pois}(nT)$, where $n=|V|$.
\end{proposition}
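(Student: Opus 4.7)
The plan is to reduce the claim to two standard facts about Poisson processes. First, I would invoke the superposition property: the union of the $n$ independent rate-1 Poisson clocks at the sites $v\in V$ is distributed as a single Poisson process of rate $n$. Consequently, the number of clock rings occurring in the time interval $[0,T]$ is distributed as $\mathrm{Pois}(nT)$, matching the distribution of $N$ in the statement.

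Next, I would apply the coloring/thinning property: conditioned on a ring occurring at some time $S_i\in[0,T]$ in the superposed process, the identity of the site whose clock rang is uniform on $V$, and these site labels are mutually independent across different rings. This is precisely the distribution over update sites used by the discrete-time chain $(X_t^{\mathsf{D}})$.

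I would then couple the two chains step by step. Condition on $N=k$ rings at times $0<S_1<\dots<S_k\le T$ with site labels $v_1,\dots,v_k\in V$ drawn i.i.d.\ uniformly from $V$. Starting from the common initial configuration $X_0^{\mathsf{C}}=X_0^{\mathsf{D}}$, the continuous-time chain updates $v_i$ at time $S_i$ by drawing a fresh spin from $P_{v_i}^{\tau}$ with $\tau$ the current configuration on $N_{v_i}^+$, while the discrete-time chain performs exactly the same operation at its $i$-th step. Hence $X_T^{\mathsf{C}}=X_k^{\mathsf{D}}$ almost surely under this coupling, so $X_T^{\mathsf{C}}\mid(N=k)$ and $X_k^{\mathsf{D}}$ are identically distributed. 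Averaging over $k\sim\mathrm{Pois}(nT)$ yields the claim.

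I do not expect any real obstacle: the only mildly delicate point is making explicit that the discrete-time chain's two operations (picking a uniform site, then resampling its spin) correspond exactly to the continuous-time construction under the coloring property, so that no probability mass is lost at coincident ring times (which occur with probability zero). Everything else is bookkeeping over the joint distribution of $(N, S_1,\dots,S_N, v_1,\dots,v_N)$.
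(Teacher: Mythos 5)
The paper states \Cref{pro:poisson1} without proof, treating it as a standard fact about Poisson processes. Your proposal supplies the canonical argument — superposition of the $n$ independent rate-$1$ clocks into a single rate-$n$ process, the marking/coloring property giving i.i.d.\ uniform site labels conditional on the ring times, and a step-by-step coupling of the two chains — and it is correct and complete.
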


\paragraph{Mixing time.}
According to the Markov chain convergence theorem~\cite{levin2017markov}, 
an irreducible and aperiodic Markov chain on a finite state space converges to a unique stationary distribution $\pi$.
The {mixing time} 
measures the rate at which the chain converges to this stationary distribution.
For two distributions $\mu,\nu$ over the same sample space $\Omega$, the \emph{total variation distance} between $\mu$ and $\nu$ is defined as $\dtv(\mu,\nu) \triangleq \frac{1}{2}\sum_{x\in \Omega}|\mu(x)-\nu(x)|$.
For a chain $X_t$ on the space $Q^V$ that converges to the stationary distribution $\pi$,
and for any configuration $\sigma\in Q^V$, let $p_{t}^{\sigma}$ denote the distribution of $X_t$, given $X_0=\sigma$.
The \emph{mixing time} is defined as $t_{\mathsf{mix}}(\epsilon)\triangleq \max_{\sigma\in Q^V}\min\left\{t>0\mid \dtv\left(p_t^\sigma,\pi\right)\le\epsilon\right\}$.

Let $t_{\mathrm{mix}}^{\mathsf{C}} (\epsilon)$ and $t_{\mathrm{mix}}^{\mathsf{D}} (\epsilon)$  denote the mixing times of the continuous-time single-site dynamics $(X_t^{\mathsf{C}})_{t\in\mathbb{R}_{\ge 0}}$  and its discrete-time counterpart  $(X_t^{\mathsf{D}})_{t\in\mathbb{N}_{\ge 0}}$, respectively, where both are defined by the same local transition rule $\{P_v^{\tau}\}$. 
The following relation between the two mixing times follows immediately from \Cref{pro:poisson1}:
\begin{align}
n\cdot t_{\mathrm{mix}}^{\mathsf{C}} (\epsilon) = O\left(t_{\mathrm{mix}}^{\mathsf{D}}(\epsilon/2) + \log \frac{1}{\epsilon}\right),\label{eq:continuous-discrete-mixing}
\end{align}
where $n=|V|$ is the number of sites.

In particular, an $O(n\log n)$ mixing time for the discrete-time chain implies an  $O(\log n)$ mixing time for the continuous-time chain.
Furthermore, due to a general lower bound \cite{hayes2007general}, these mixing times are optimal for single-site dynamics. 

\paragraph{Computation models.} 
In this paper, we assume the concurrent-read concurrent-write ($\mathsf{CRCW}$) parallel random access machine  (\PRAM) model with arbitrary write, 
where an arbitrary value written concurrently is stored.
The \emph{depth} of an algorithm in this model refers to the number of time steps required for its execution.

We use \NC{} to denote both the class of parallel algorithms with poly-logarithmic depth and polynomial processors and the class of problems solvable by such algorithms.
The class \RNC{} refers to the randomized counterpart of \NC{}. 

The \CONGEST{} model is defined on an undirected network $G=(V,E)$, where the nodes represent processors.
Initially, each node receives its local input and private random bits.
Communications are synchronized and proceed in rounds. 
In each round, each node may perform arbitrary local computation and send a $B$-bit message to each of its neighbors.
The messages are received by the end of the round. The model is denoted as \CONGEST(B).


\section{A Locally-Iterative Algorithm for Simulating Markov Chain}

We introduce a locally-iterative algorithm that simulates single-site dynamics.
The algorithm falls into the general framework of message passing algorithms, such as the well-known \emph{belief propagation (BP)} algorithms~\cite{pearl1982reverend,mezard2009information}.
In this algorithm, each site maintains an internal state, 
and all internal states are updated iteratively based on the current internal states within their local neighborhoods until a fixpoint is reached.

Let $(X_t)_{t\in\mathbb{R}_{\ge 0}}$ be a continuous-time single-site dynamics on a graph $G=(V,E)$, defined by local update distributions $\{P_v^\tau\}$.
Our goal is to simulate this Markov chain up to a fixed time $T>0$.
For each site $v\in V$, a rate-1 Poisson clock runs independently up to time $T$, generating a sequence of times:
\begin{align*}
0=t_0^v<t_1^v<\cdots <t_{m_v}^v<T, \text{where }m_v\sim\mathrm{Pois}(T).
\end{align*}
%
With probability 1, all $t_i^v$ values for $i\ge 1$ are distinct.
We refer to such a collection of time sequences for all sites  $\mathfrak{T}=(t_i^v)_{v\in V, 0\le i\le m_v}$ as  an \emph{update schedule} up to time $T$.

Given an update schedule $\mathfrak{T}$, we identify the $i$-th {update} at site $v$ (which occurs at time $t=t_i^v$) by the pair $(v,i)$.
To perform this update,  the current spin $X_t(v)$ at site $v$ is replaced by a random spin generated independently according to $P_v^{\tau}$. 
Here, $\tau\in Q^{N_v^+}$ is constructed as follows: for each $u\in N_v^+$, $\tau_u=X_{t_{j_u}^u}(u)$ represents the spin generated during the $j_u$-th update at site $u$, where $j_u=\max\{j\ge 0\mid t_j^u<t_i^v\}$.

The evolution of this process can then be described by the following abstract dynamical system:
\begin{align}
X_t(v)\gets \Sample\left(P_v^{\tau},\RandSeed_{(v,i)}\right),\label{eq:sample-subroutine-for-update}
\end{align}
where $\RandSeed_{(v,i)}$ denotes the \emph{random seed} used for the update $(v,i)$.
The function $\Sample\left(\mu,\RandSeed\right)$ is a \emph{deterministic} subroutine that,
given the description of any distribution $\mu$ over $Q$ and access to the random seed $\RandSeed$,
guarantees to return a spin distributed according to $\mu$ by utilizing the random bits in $\RandSeed$. 

Given the initial configuration $X_0\in Q^V$,
once the update schedule $\mathfrak{T}=(t_i^v)_{v\in V,0\le i\le m_v}$ up to time $T$
and the random bits $\mathfrak{R}=(\RandSeed_{(v,i)})_{v\in V,1\le i\le m_v}$ used in all updates are generated,
the entire evolution of the chain $(X_t)_{0\le t\le T}$ is uniquely determined by sequentially executing~\eqref{eq:sample-subroutine-for-update}.

A parallel procedure that  faithfully simulates this process is given in \Cref{alg:BP-simulation}.

\begin{algorithm}[ht]
 \SetKwInOut{Input}{Input} 
\Input{ initial configuration $X_0\in Q^V$; update schedule $\mathfrak{T}=(t_i^v)_{v\in V,0\le i\le m_v}$;
\\assignment $\mathfrak{R}=(\RandSeed_{(v,i)})_{v\in V,1\le i\le m_v}$ of random bits for resolving updates.}
{initialize} $\ell\gets 0$ and $\CurConfig{\widehat{X}}{0}{v}[i]\gets X_0(v)$ for all $v\in V$, $0\le i\le m_v$\;
\Repeat{$\widehat{X}^{(\ell)}=\widehat{X}^{(\ell-1)}$}{\label{alg:line:repeat-loop1}
	$\ell\gets \ell+1$\;
	\lForAllPar{$v\in V$}{$\CurConfig{\widehat{X}}{\ell}{v}[0]\gets X_0(v)$}
	\ForAllPar{updates $(v,i)$, where $v\in V$, $1\le i\le m_v$, }{
		let $\tau\in Q^{N_v^+}$ be constructed as: \hspace{300pt}
		\mbox{\hspace{20pt}}$\forall u\in N_v^+$, $\tau_u\gets\CurConfig{\widehat{X}}{\ell-1}{u}[j_u]$ for $j_u=\max\{j\ge 0\mid t_j^u<t_i^v\}$\; \label{alg:line:neighborhood-config}
		$\CurConfig{\widehat{X}}{\ell}{v}[i]\gets \Sample\left(P_v^{\tau},\RandSeed_{(v,i)}\right)$\; \label{alg:line:BP-update}
	}
} 
\caption{Locally-iterative algorithm for simulating single-site dynamics}\label{alg:BP-simulation}
\end{algorithm}

\begin{remark}[{consistent iterative updating}]\label{rmk:consistent-sampling}
The algorithm maintains an array $\widehat{X}^{(\ell)}$, where each entry $\widehat{X}_{v}^{(\ell)}[i]$ corresponds to the update $(v,i)$ in the chain.
In every iteration $\ell\ge 1$, each entry $\widehat{X}_{v}^{(\ell)}[i]$ is updated from $\widehat{X}^{(\ell-1)}$ according to the rule~\eqref{eq:sample-subroutine-for-update},
using the same assignment  of random bits $\RandSeed_{(v,i)}$ regardless of the iteration  $\ell$.
\end{remark}

\Cref{alg:BP-simulation} is a locally iterative parallel algorithm that simulates the abstract dynamical system described in~\eqref{eq:sample-subroutine-for-update}, utilizing coupled randomness as noted in \Cref{rmk:consistent-sampling} to accelerate convergence while ensuring accurate simulation.
Unlike the belief propagation (BP) algorithm used to calculate marginal probabilities, whose correctness is not guaranteed on general graphs~\cite{mezard2009information},  
\Cref{alg:BP-simulation} always converges to the correct chain within a finite number of steps.
This guarantee is based on the following two observations:
\begin{enumerate}
    \item \label{item:monotone-dynamical-system} The abstract dynamical system in~\eqref{eq:sample-subroutine-for-update} is acyclic due to its monotonicity over time. 
    \item  The correct evolution of the Markov chain $(X_t)_{0\le t\le T}$ from time 0 to time $T$ corresponds to a fixed point solution satisfying~\eqref{eq:sample-subroutine-for-update}. 
\end{enumerate}

We now elaborate on these observations.
Fix an update schedule $\mathfrak{T}=(t_i^v)_{v\in V,0\le i\le m_v}$.
For any updates $(u,j)$ and $(v,i)$ where $i,j\ge 1$, define the relation: 
\begin{equation}
\label{order-def}
(u,j)\prec (v,i) \iff u\in N_v^+ \land j=\max\{j\ge 0\mid t_j^u<t_i^v\}.
\end{equation}
Intuitively, $(u,j)\prec (v,i)$ means that  the update $(v,i)$ is determined  based on the outcome of update~$(u,j)$.
This relation $\prec$ between updates  naturally defines a directed graph $D_{\prec}=(U,E_{\prec})$, where the vertices are $U=\left\{ (v,i)\mid v \in V, 1\leq i\leq m_v \right\}$ and the edges are:
\begin{equation}
\label{dag-def}
    E_\prec=\left\{ \langle(u,j), (v,i)\rangle \mid (u,j),(v,i)\in U, (u,j)\prec (v,i) \right\}.
\end{equation}
It is easy to verify that $D_{\prec}$ is a directed acyclic graph (DAG) due to the monotonicity in time $t_i^v$. We refer to $D_{\prec}$ as the \emph{dependency graph}, which depicts the dependencies between updates.

The following lemma is established through structural induction on the dependency graph $D_{\prec}$.
\begin{lemma}[{convergence \& correctness}]\label{lemma:BP-convergence}
Fix any initial configuration $X_0\in Q^V$, update schedule $\mathfrak{T}=(t_i^v)_{v\in V,0\le i\le m_v}$ up to time $T$, and assignment of random bits  $\mathfrak{R}=(\RandSeed_{(v,i)})_{v\in V,1\le i\le m_v}$ for resolving updates.
Let $(X_t)_{0\le t\le T}$ be the continuous-time chain fully determined by $X_0$, $\mathfrak{T}$ and $\mathfrak{R}$. 
\begin{enumerate}
\item \label{item:BP-convergence}
\Cref{alg:BP-simulation} terminates within $m+1$ iterations (of the \textbf{repeat} loop), where $m=\sum_{v\in V}m_v$.
\item \label{item:BP-correctness}
Upon termination, 
$\widehat{X}$ correctly gives all transitions of the chain $(X_t)_{0\le t\le T}$,
such that $\widehat{X}_v[i]=X_{t_i^v}(v)$ for every update $(v,i)$. 
In particular, $\left(\widehat{X}_v[m_v]\right)_{v\in V}=X_T$ gives the configuration at time $T$.
\end{enumerate}
\end{lemma}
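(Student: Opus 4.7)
The plan is to identify the algorithm's fixed point with the ground truth and argue that each iteration commits one more update correctly. Define the target array $\widehat{X}^*_v[i] \triangleq X_{t_i^v}(v)$, i.e.\ the spin written by the true continuous-time process under schedule $\mathfrak{T}$ and randomness $\mathfrak{R}$. Because the positive update times $\{t_i^v : i \geq 1\}$ are almost surely all distinct, I can order the $m = \sum_{v\in V} m_v$ non-initial updates globally by time as $U_1, U_2, \ldots, U_m$, writing $U_k = (v_k, i_k)$. The explicit initialization together with the per-iteration reset on the ``$i=0$'' row of the repeat body keeps $\widehat{X}^{(\ell)}_v[0] = X_0(v) = \widehat{X}^*_v[0]$ throughout.

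The main step is an induction on $\ell$ showing that after the $\ell$-th pass of the repeat body, $\widehat{X}^{(\ell)}_{v_k}[i_k] = \widehat{X}^*_{v_k}[i_k]$ for every $k \leq \ell$. The base case $\ell = 0$ is trivial. For the inductive step, focus on how the $\ell$-th update $U_\ell = (v,i)$ is processed in iteration $\ell$. The configuration $\tau$ built on Line~\ref{alg:line:neighborhood-config} reads each $\widehat{X}^{(\ell-1)}_u[j_u]$ for $u \in N_v^+$ with $j_u = \max\{j \geq 0 \mid t_j^u < t_i^v\}$. If $j_u = 0$ the entry is the initial value $X_0(u)$, which is correct by the initialization; if $j_u \geq 1$ then $(u, j_u)$ occurs strictly earlier in time than $U_\ell$, hence $(u, j_u) = U_k$ for some $k < \ell$ and is correct by the induction hypothesis (the $u = v$ sub-case uses $j_v = i - 1 < i$). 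Therefore $\tau$ matches the true neighborhood configuration of $v$ just before $t_i^v$, and since Line~\ref{alg:line:BP-update} calls $\Sample$ with the very same seed $\RandSeed_{(v,i)}$ used in the definition of $\widehat{X}^*$ and $\Sample$ is deterministic in its inputs, the output equals $\widehat{X}^*_v[i]$.

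After $m$ iterations the inductive claim gives $\widehat{X}^{(m)} = \widehat{X}^*$; then iteration $m+1$ invokes every $\Sample$ call on exactly the arguments that generated $\widehat{X}^*$, so $\widehat{X}^{(m+1)} = \widehat{X}^{(m)}$ and the repeat-until condition fires, proving part~\ref{item:BP-convergence}. Part~\ref{item:BP-correctness} then follows at once: $\widehat{X}_v[i] = \widehat{X}^*_v[i] = X_{t_i^v}(v)$ by definition, and because no update at $v$ takes place in $(t_{m_v}^v, T]$ the continuous-time chain is constant on that interval, giving $\widehat{X}_v[m_v] = X_T(v)$. The only delicate point is legitimizing the strict global time order used in the induction, which rests on the almost-sure distinctness of the Poisson event times; this holds with probability $1$ and the rest of the argument is purely combinatorial, so no analytic estimates are needed.
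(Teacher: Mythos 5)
Your proof is correct and takes a genuinely different route from the paper's. The paper proves convergence and correctness separately: it first establishes a local stabilization lemma (that $\widehat{X}^{(\ell)}_v[i]$ stops changing once $\ell$ exceeds the \emph{depth} $\dep(v,i)$ in the partial order $\preceq$ of \eqref{order-def}, by induction on depth), then separately argues that the fixpoint must satisfy the same recursion as the true chain and hence agrees with it by induction over $\preceq$. You instead run a single induction on the iteration count $\ell$, tracking which entries have already reached their \emph{correct} value (not merely stabilized), and you order the updates by the \emph{total} time order rather than the partial order $\preceq$. This gets you both parts at once in a self-contained way. The tradeoff: the paper's depth-based lemma is the workhorse behind the subsequent fast-convergence analysis (bounding termination by the longest $\preceq$-chain, which is typically far less than $m$), so the paper pays the cost of the two-step decomposition to get a more refined, reusable statement, while your argument only yields the crude $m+1$ bound. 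One small point to tighten: your inductive step ``focuses on'' $U_\ell$, but the claim asserts correctness of all $U_k$ with $k\le\ell$ after iteration $\ell$, so you should note explicitly that the identical argument re-establishes $U_k$ for each $k<\ell$ as well (each such $U_k$ is recomputed in iteration $\ell$ from $\widehat{X}^{(\ell-1)}$, and its predecessors lie among $U_1,\ldots,U_{k-1}$, which are correct there by the hypothesis). This is implicit but worth stating.
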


\paragraph{Fast convergence}
For a random pair $(\mathfrak{T},\mathfrak{R})$ generated as in a continuous-time chain $(X_t)_{0\le t\le T}$ up to time $T$,
calculations from~\cite{feng2021distributed,hayes2007general} show that,  with high probability, 
the length of any path in the dependency graph $D_{\prec}$ is bounded by $O(\Delta T+\log n)$, 
where $\Delta$ is the maximum degree of the underlying graph $G$.
Consequently, \Cref{alg:BP-simulation} returns within $O(\Delta T+\log n)$ iterations with high probability.

Indeed, 
\Cref{alg:BP-simulation}  can converge to the correct fixpoint significantly faster than the length of the longest path in $D_{\prec}$.
To understand this, recall that in \Cref{alg:line:BP-update} of \Cref{alg:BP-simulation}, the array $\widehat{X}$ is updated as follows:
\[
{\widehat{X}}_{v}[i]\gets \Sample\left(P_v^{\tau},\RandSeed_{(v,i)}\right).
\]
Here, the entry $\widehat{X}_{v}[i]$ (which corresponds to the update $(v,i)$) can locally stabilize (i.e.~remains the same in two consecutive iterations) even before its neighborhood configuration $\tau$ has stabilized, under the following conditions:
\begin{enumerate}
\item\label{informal-Lipschitz-cond-1}
The distributions $P_v^{\tau}, P_v^{\tau'}$ are close to each other when their boundary conditions $\tau, \tau'\in Q^{N_v^+}$ are similar.
\item\label{informal-Lipschitz-cond-2}
For  $P_v^{\tau}, P_v^{\tau'}$ that are close to each other,
it is likely that $\Sample(P_v^{\tau},\RandSeed_{(v,i)})=\Sample(P_v^{\tau'},\RandSeed_{(v,i)})$.
\end{enumerate}
The first property is captured by \Cref{cond:main},
and the second is formalized by the following notion of ${\alpha}$-competitiveness:

\begin{definition}[{${\alpha}$-competitiveness}]\label{def:competitive-sample}
Let $\alpha\ge 1$.
A procedure $\Sample(\mu,\RandSeed)$ that returns a sample from distribution $\mu$ arbitrarily specified over $Q$, 
is said to be \emph{$\alpha$-competitive}, if for any $\mu, \nu$ over $Q$,
\begin{align}
\Pr_{\RandSeed}\left[\,\Sample\left(\mu,\RandSeed\right)\neq \Sample\left(\nu,\RandSeed\right)\,\right]\le \alpha\cdot \dtv(\mu,\nu).\label{eq:competitive-sample}
\end{align}
\end{definition}

The following convergence bound for \Cref{alg:BP-simulation} on random input justifies the above intuition.
On random choices $(\mathfrak{T},\mathfrak{R})$ generated as in the continuous-time chain $(X_t)_{0\le t\le T}$ up to time $T$, 
\Cref{alg:BP-simulation} faithfully simulates the chain using $O\left(T+\log {n}\right)$ iterations with high probability,
assuming \Cref{cond:main} with parameter $\rho=O(1)$ and an $O(1)$-competitive $\Sample$ subroutine.

\begin{lemma}[{fast convergence with linear speedup}]\label{lemma:BP-fast}
If \Cref{cond:main} holds with parameter $\rho$ and the $\Sample$ subroutine is $\alpha$-competitive,
then for every $T>0$ and $\epsilon\in(0,1)$,
\Cref{alg:BP-simulation} on any initial configuration $X_0\in Q^V$ and  random $(\mathfrak{T},\mathfrak{R})$ up to time $T$ terminates within $O\left(\alpha\rho\cdot T+\log \left(\frac{n}{\epsilon}\right)\right)$ iterations with probability $\ge 1-\epsilon$.
\end{lemma}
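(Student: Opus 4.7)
The plan is to bound, for every update $(v,i)$ and every iteration $\ell$, the probability that $\widehat{X}^{(\ell)}_v[i]$ still disagrees with its eventual fixpoint value $\widehat{X}^{(\infty)}_v[i]$ (which exists by \Cref{lemma:BP-convergence}), and then take a union bound over $(v,i)$. Fix the update schedule $\mathfrak{T}$ for now. Because \Cref{alg:BP-simulation} calls $\Sample$ with the same seed $\RandSeed_{(v,i)}$ at every iteration, the $\alpha$-competitiveness of $\Sample$ gives, conditioned on the neighborhood configurations $\tau^{(\ell-1)},\tau^{(\infty)}\in Q^{N_v^+}$ read by $(v,i)$ at iterations $\ell$ and $\infty$,
\[
\Pr\!\left[\widehat{X}^{(\ell)}_v[i]\neq\widehat{X}^{(\infty)}_v[i]\,\middle|\,\tau^{(\ell-1)},\tau^{(\infty)}\right]\le\alpha\cdot\dtv\!\left(P_v^{\tau^{(\ell-1)}},P_v^{\tau^{(\infty)}}\right).
\]
A standard path-coupling expansion of $\dtv$ applying \Cref{def:dobrushin-matrix} along a sequence of single-coordinate edits from $\tau^{(\ell-1)}$ to $\tau^{(\infty)}$ bounds this by $\alpha\sum_{u\in N_v^+}\Inf{u}{v}\,\mathbf{1}[\tau^{(\ell-1)}_u\neq\tau^{(\infty)}_u]$, and taking expectations yields the one-step recursion
\[
\Pr\!\left[\widehat{X}^{(\ell)}_v[i]\neq\widehat{X}^{(\infty)}_v[i]\right]\le\alpha\sum_{u\in N_v^+}\Inf{u}{v}\,\Pr\!\left[\widehat{X}^{(\ell-1)}_u[j_u]\neq\widehat{X}^{(\infty)}_u[j_u]\right],
\]
where $j_u=\max\{j:t^u_j<t^v_i\}$ is the predecessor of $(v,i)$ at $u$ in the dependency DAG.

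Unfolding the recursion $\ell$ times expresses the disagreement probability of $(v,i)$ as a weighted sum over length-$\ell$ dependency walks $(v,i)=(v_0,i_0)\to(v_1,i_1)\to\cdots\to(v_\ell,i_\ell)$ with $v_{k+1}\in N_{v_k}^+$ and strictly decreasing times $t^{v_k}_{i_k}$, each walk carrying weight $\prod_{k=0}^{\ell-1}\alpha\,\Inf{v_{k+1}}{v_k}$. Averaging over the Poisson schedule $\mathfrak{T}$ decouples the estimate cleanly into a temporal and a spatial factor: for every fixed spatial sequence $v_0,\ldots,v_\ell$, Campbell's theorem says the expected number of valid time-decreasing $(\ell+1)$-tuples of events from independent rate-$1$ Poisson clocks on $[0,T]$ is the ordered-simplex volume $T^{\ell+1}/(\ell+1)!$, while H\"older's inequality together with the induced operator norm bound $\|\InfMat\|_p\le C$ yields
\[
\sum_{v_0,\ldots,v_\ell}\prod_{k=0}^{\ell-1}\Inf{v_{k+1}}{v_k}=\mathbf{1}^T\InfMat^\ell\mathbf{1}\le n^{1-1/p}\,\|\InfMat^\ell\mathbf{1}\|_p\le n\cdot C^\ell.
\]
Multiplying the two estimates and summing over the $(v,i)$ starting the walk,
\[
\sum_{(v,i)}\Pr\!\left[\widehat{X}^{(\ell)}_v[i]\neq\widehat{X}^{(\infty)}_v[i]\right]\le nT\cdot\frac{(\alpha CT)^\ell}{(\ell+1)!}\le nT\cdot\left(\frac{e\,\alpha CT}{\ell}\right)^\ell.
\]

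Choosing $\ell=c(\alpha CT+\log(n/\epsilon))$ for a sufficiently large constant $c$ drives the right-hand side below $\epsilon$, and Markov's inequality then gives that with probability at least $1-\epsilon$ every entry already agrees with its fixpoint value, forcing $\widehat{X}^{(\ell+1)}=\widehat{X}^{(\ell)}$ and triggering the exit of the \textbf{repeat} loop. The main technical subtlety will be the clean factorization of the expected weighted walk count into a spatial matrix-norm piece and a temporal Poisson-simplex piece; in particular one has to check that restricting to genuine dependency walks (where $(v_{k+1},i_{k+1})$ is \emph{the} latest predecessor at $v_{k+1}$, not just any earlier update there) only shrinks the count, so that the unrestricted Poisson bound $T^{\ell+1}/(\ell+1)!$ remains a valid upper estimate, and that the path-coupling expansion of $\dtv$ through $\InfMat$ extends to the inclusive neighborhood $N_v^+$.
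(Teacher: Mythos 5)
Your proposal is correct and follows essentially the same route as the paper: an $\alpha$-competitiveness recursion that reduces per-entry disagreement probabilities to a weighted sum over dependency walks, a path-coupling/triangle-inequality expansion of $\dtv$ through the Dobrushin influences, a spatial bound $\mathbf{1}^{\!\top}\InfMat^{\ell}\mathbf{1}\le nC^{\ell}$ via H\"older and the $\ell_p$-operator norm, a temporal Poisson-ordered-simplex bound, and a union bound with the resulting choice of $\ell=O(\alpha CT+\log(n/\epsilon))$. The only presentational differences are that you compare $\widehat{X}^{(\ell)}$ to the fixpoint $\widehat{X}^{(\infty)}$ rather than to $\widehat{X}^{(\ell-1)}$, and you carry out the temporal average with Campbell's theorem in the continuous-time picture instead of the paper's conditioning on the total update count $M=m$ and summing $\binom{m}{\ell}$ ordered tuples---both of which yield the same $T^{\Theta(\ell)}/\ell!$ factor and the same final iteration count.
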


Furthermore, the following lemma gives a much improved bound on the convergence rate (with an exponential speedup) for \Cref{alg:BP-simulation}  under a more restricted condition $\alpha\rho<1$.

\begin{lemma}[{fast convergence with exponential speedup}]\label{lemma:BP-fast2}
Assume the condition of \Cref{lemma:BP-fast}.
If further $\alpha \rho<1$,
then for every $T>0$ and $\epsilon\in(0,1)$,
\Cref{alg:BP-simulation} on any initial configuration $X_0\in Q^V$ and  random $(\mathfrak{T},\mathfrak{R})$ up to time $T$ terminates within $O\left(\frac{1}{1-\alpha\rho}\log\left(\frac{nT}{\epsilon}\right)\right)$ iterations with probability $\ge 1-\epsilon$.
\end{lemma}

\paragraph{Universal coupling.} 
It remains to investigate whether there  exists a $\Sample$  procedure that always achieves small competitiveness when sampling from an arbitrarily specified distribution $\mu$ over $Q$.
Such a sampling procedure would provide a \emph{universal coupling},
simultaneously coupling all distributions $\mu$ over the same sample space $Q$.

We first show that for Boolean domain $Q$ of size $|Q|=2$, 
the standard \emph{inverse transform sampling} provides a $1$-competitive $\Sample$ subroutine that can perfectly couple all pairs of distributions $\mu$ over $Q$.
\begin{definition}[{inverse transform sampling}]\label{def:inverse-transform-sampling}
Let 
 $\RandSeed$ be uniformly distributed over $[0,1]$. For any distribution $\mu$ over the Boolean domain $Q =\{0,1\}$, define 
 \begin{align}
     \Sample(\mu,\RandSeed) = \begin{cases}0 & \RandSeed<\mu(0),\\ 1& \RandSeed \geq \mu(0). \end{cases}
 \label{eq:two-spin-procedure}
 \end{align}
\end{definition}
This $\Sample(\mu,\RandSeed)$ implements the {inverse transform sampling} method for the Bernoulli distribution $\mu$, 
where the uniform random seed $\RandSeed\in [0,1]$ is used as the quantile.

It is straightforward to verify that $\Sample(\mu,\RandSeed)$ is distributed as $\mu$, and 
for any distributions $\mu,\nu$ over $Q = \{0,1\}$, 
 \begin{align}
 \Pr_{\RandSeed}\left[\,\Sample\left(\mu,\RandSeed\right)\neq \Sample\left(\nu,\RandSeed\right)\,\right] = \dtv(\mu,\nu).
 \nonumber
 \end{align}
This confirms that the  $\Sample(\mu,\RandSeed)$ procedure as defined in \eqref{eq:two-spin-procedure} for sampling from Boolean domain  is $1$-competitive. 
As a result, we can state the following corollary of \Cref{lemma:BP-fast2} for Boolean domains. 

\begin{corollary}\label{lemma:BP-fast-2spin}
If $|Q|=2$ and \Cref{cond:main} holds with parameter $\rho<1$,
then for every $T>0$ and $\epsilon\in(0,1)$,
\Cref{alg:BP-simulation}, when applied to any initial configuration  $X_0\in Q^V$ and  random $(\mathfrak{T},\mathfrak{R})$ up to time $T$, 
terminates within $O\left(\frac{1}{1-\rho}\log\left(\frac{nT}{\epsilon}\right)\right)$ iterations with probability at least $1-\epsilon$.
\end{corollary}

For general finite non-Boolean domains $Q$, the competitive ratio achieved by the inverse transform sampling method increases linearly with $|Q|$.
However, it is surprising that a universal coupling within twice the optimal coupling can always be achieved between any pair of distributions $\mu,\nu$ over any finite sample space $Q$.

\begin{theorem}[{twice-optimal universal coupling}]
\label{thm:2-competitive}
For any finite sample space $Q$, there exists a $2$-competitive $\Sample(\mu,\RandSeed)$ procedure. 
\end{theorem}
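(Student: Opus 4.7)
The plan is to construct $\Sample(\mu, \RandSeed)$ as shared rejection sampling from the uniform distribution on $Q$. I would interpret $\RandSeed$ as an infinite sequence of i.i.d.\ pairs $(x_1, u_1), (x_2, u_2), \ldots$ with each $x_i$ uniform on $Q$ and each $u_i$ uniform on $[0, 1]$, and then define $\Sample(\mu, \RandSeed) = x_I$ for $I = \min\{i \ge 1 : u_i \le \mu(x_i)\}$. Since the per-step acceptance probability is $\frac{1}{q}\sum_{x \in Q} \mu(x) = \frac{1}{q}$, the index $I$ is almost surely finite, and the standard rejection-sampling calculation shows $x_I \sim \mu$. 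Crucially, the procedure touches $\mu$ only through the pointwise test ``$u_i \le \mu(x_i)$'', so the very same $\RandSeed$ drives a valid sampler for every distribution $\mu$ over $Q$; this is the universal coupling we need.

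To verify the 2-competitiveness bound, I would fix arbitrary $\mu, \nu$ and examine the two shared-seed executions simultaneously. At each index $i$, conditioning on the uniform $x_i$ and then on the uniform $u_i$, the three ``acceptance'' events have probabilities
\begin{align*}
\Pr\bigl[u_i \le \min(\mu(x_i), \nu(x_i))\bigr] &= \frac{1}{q}\sum_{x}\min(\mu(x), \nu(x)) = \frac{1 - \dtv(\mu, \nu)}{q}, \\
\Pr\bigl[\nu(x_i) < u_i \le \mu(x_i)\bigr] &= \frac{1}{q}\sum_{x}(\mu(x) - \nu(x))_+ = \frac{\dtv(\mu, \nu)}{q}, \\
\Pr\bigl[\mu(x_i) < u_i \le \nu(x_i)\bigr] &= \frac{\dtv(\mu, \nu)}{q},
\end{align*}
with the remaining mass on ``neither accepts''. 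Let $I^{\star}$ be the first index at which any of these three events fires. On ``both accept'' at $I^{\star}$ the two executions halt simultaneously and both output $x_{I^{\star}}$, so they agree; on ``only $\mu$'' or ``only $\nu$'' at $I^{\star}$ exactly one execution halts while the other keeps searching, which is the only way disagreement can occur.

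Combining the three per-step probabilities, the conditional probability that $I^{\star}$ is a ``both accept'' step is exactly $\frac{1 - \dtv(\mu, \nu)}{1 + \dtv(\mu, \nu)}$, so
\[
\Pr_{\RandSeed}\bigl[\Sample(\mu, \RandSeed) \neq \Sample(\nu, \RandSeed)\bigr] \;\le\; 1 - \frac{1 - \dtv(\mu, \nu)}{1 + \dtv(\mu, \nu)} \;=\; \frac{2\,\dtv(\mu, \nu)}{1 + \dtv(\mu, \nu)} \;\le\; 2\,\dtv(\mu, \nu),
\]
which is 2-competitiveness. There is no deep obstacle here; the essential idea is simply recognizing that shared rejection sampling against a common proposal already furnishes a universal coupling, after which the bound reduces to the short calculation above. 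The only point requiring a little care is that when the two executions split at $I^{\star}$, the continuing one could in principle reproduce the other's output by coincidence; I avoid tracking this by simply upper-bounding the disagreement probability by $\Pr[T_\mu \neq T_\nu]$, which already suffices.
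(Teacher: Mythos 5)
Your proposal is correct and is essentially the paper's own construction and analysis: you use the same shared rejection sampler against the uniform proposal on $Q$, and you prove agreement by looking at the first index where at least one of the two executions accepts, splitting into the events ``$u_i \le \min(\mu(x_i),\nu(x_i))$'' (both accept) versus ``exactly one accepts,'' which is exactly the paper's comparison of $\min(I_p^*,I_q^*)$ and $\max(I_p^*,I_q^*)$. The only differences are cosmetic (you phrase the $\frac{1-\dtv}{1+\dtv}$ step as a one-shot geometric conditioning rather than summing over the value of the first accept time), and your final bound, including the observation that disagreement can only arise from a split at the first accept, matches the paper's Lemma 4.5.
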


This sampling procedure, now known as \emph{correlated sampling}, has been discovered in multiple independent works.
Notably, it was identified by Kleinberg and Tardos~\cite{kleinberg2002approximation} for rounding linear programs 
and by Holenstein~\cite{holenstein2007parallel} for parallel repetition of 2-player 1-round games.
The procedure will be formally described and analyzed in \Cref{sec:universal-coupling}
\color{black}

The rest of this section is dedicated to proving \Cref{lemma:BP-convergence}, \Cref{lemma:BP-fast} and \Cref{lemma:BP-fast2}.

\subsection{Worst-case Convergence and Correctness}\label{sec:BP-convergence}

The dependency graph $D_{\prec}$ defined in~\eqref{dag-def} gives rise to a notion of depth for updates.
For each update $(v,i)\in U$, its \emph{depth}, denoted by $\dep(v,i)$, is defined as follows:
\[
\dep(v,i)=
\begin{cases}
1 & \text{if }\neg\exists (u,j)\text{ s.t. }(u,j){\prec} (v,i),\\
1+\max\limits_{(u,j)\prec(v,i)}\dep(u,j) & \text{otherwise}.
\end{cases}	
\] 
Note that $\dep(v,i)$ represents the length of the longest path that ends at $(v,i)$ in the DAG $D_\prec$.

Now consider the $\widehat{X}$ maintained by \Cref{alg:BP-simulation}.
Note that for every update $(v,i)$ and any possible iteration number $\ell\ge 0$, the entry $\CurConfig{\widehat{X}}{\ell}{v}[i]$ is assigned only once in \Cref{alg:line:BP-update} of \Cref{alg:BP-simulation}.
For any $\ell\ge 1$ where the algorithm terminates before reaching the $\ell$-th iteration, we assume $\widehat{X}^{(\ell)}= \widehat{X}^{(\ell-1)}$.
Therefore, $\widehat{X}^{(\ell)}$ is well-defined for all $\ell\ge 0$, given the input $X_0$, $\mathfrak{T}$ and $\mathfrak{R}$.

The next lemma shows that for every update $(v,i)$, the corresponding entry $\CurConfig{\widehat{X}}{\ell}{v}[i]$ stops changing, i.e.~it locally terminates, after $\dep(v,i)$ iterations.
\begin{lemma}\label{lemma:round-depth-bound}
For each update $(v,i)$, where $v\in V$ and $1\leq i\leq m_v$, 
for any $\ell> \dep(v,i)$, it holds that
\[
\CurConfig{\widehat{X}}{\ell}{v}[i]=\CurConfig{\widehat{X}}{\ell-1}{v}[i].
\]
\end{lemma}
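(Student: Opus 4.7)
The plan is to prove the lemma by induction on $\dep(v,i)$, exploiting the fact that the partial order $\preceq$ in \eqref{order-def} tracks exactly the data dependencies of the BP update in \Cref{alg:line:BP-update}. The key observation is that the neighborhood configuration $\tau$ assembled in \Cref{alg:line:neighborhood-config} for the update $(v,i)$ in iteration $\ell$ only reads entries $\CurConfig{\widehat{X}}{\ell-1}{u}[j_u]$ where either $j_u=0$ (which is a fixed initial value) or $(u,j_u)\prec(v,i)$ (which has strictly smaller depth). Once all these predecessors have locally stabilized, the deterministic sampler $\Sample$ invoked with the fixed random seed $\RandSeed_{(v,i)}$ must produce the same value in every subsequent iteration.

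For the base case $\dep(v,i)=1$, there is no $(u,j)\prec(v,i)$, so for every $u\in N_v^+$ the index $j_u=\max\{j\ge 0\mid t_j^u<t_i^v\}$ must equal $0$. Hence $\tau_u=\CurConfig{\widehat{X}}{\ell-1}{u}[0]=X_0(u)$ in every iteration $\ell\ge 1$, so $\CurConfig{\widehat{X}}{\ell}{v}[i]=\Sample(P_v^\tau,\RandSeed_{(v,i)})$ takes the same value from iteration $1$ onward, which gives the claim for all $\ell>1=\dep(v,i)$.

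For the inductive step, assume the claim holds for every update of depth less than $d\triangleq\dep(v,i)$, and fix $\ell>d$. I would compare how $\tau$ is constructed in iterations $\ell$ and $\ell-1$. For each $u\in N_v^+$, either $j_u=0$, in which case $\tau_u=X_0(u)$ in both iterations, or $j_u\ge 1$, in which case $(u,j_u)\preceq(v,i)$ by the definition of $j_u$ and the fact that $u\in N_v^+$, hence $\dep(u,j_u)\le d-1$. Since $\ell-1>d-1\ge\dep(u,j_u)$, the inductive hypothesis applied to $(u,j_u)$ gives $\CurConfig{\widehat{X}}{\ell-1}{u}[j_u]=\CurConfig{\widehat{X}}{\ell-2}{u}[j_u]$. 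Thus the $\tau$ used in iteration $\ell$ coincides with the $\tau$ used in iteration $\ell-1$, and since both invocations of $\Sample$ use the same random seed $\RandSeed_{(v,i)}$ and $\Sample$ is deterministic given its inputs, $\CurConfig{\widehat{X}}{\ell}{v}[i]=\CurConfig{\widehat{X}}{\ell-1}{v}[i]$.

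No step appears genuinely hard: the only subtlety is to be careful about index bookkeeping at the boundary, namely that $\ell-2\ge 0$ whenever the inductive call is made (which is fine since $d\ge 2$ in the inductive step forces $\ell\ge 3$) and that the case $j_u=0$ is handled separately because $(u,0)$ is not a legitimate update in the partial order. Once those small points are addressed, the proof is essentially a direct translation of the dependency structure encoded by $\preceq$ into an induction.
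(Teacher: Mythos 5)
Your proof is correct and follows essentially the same strategy as the paper: an induction on $\dep(v,i)$, with the key observation that the entries read to form $\tau$ in \Cref{alg:line:neighborhood-config} are either $X_0(u)$ (when $j_u=0$) or correspond to updates $(u,j_u)$ of strictly smaller depth. One tiny slip: when $j_u\ge 1$ you should write $(u,j_u)\prec(v,i)$ rather than $\preceq$ — the strictness is what gives $\dep(u,j_u)\le d-1$, and it holds because $t_{j_u}^u<t_i^v$ is strict — but this does not affect the argument.
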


This immediately proves \Cref{item:BP-convergence} of \Cref{lemma:BP-convergence}, establishing the convergence of \Cref{alg:BP-simulation} within $m+1$ steps.
Since $m=\sum_{v\in V}m_v=|U|$ provides a trivial upper bound on the length of any path in $D_\prec=(U,E_\prec)$, \Cref{lemma:round-depth-bound} implies that the entire $\widehat{X}$ must have stopped changing after at most $m+1$ steps.

\begin{proof}[Proof of \Cref{lemma:round-depth-bound}]
A key observation is that in the $r$-th iteration, 
the neighborhood configuration $\tau\in Q^{N_v^+}$ constructed in \Cref{alg:line:neighborhood-config} for resolving an update $(v,i)$ satisfies the following: for every $u\in N_v^+$, $\tau_u= \CurConfig{\widehat{X}}{\ell-1}{u}[j_u]$, where either $(u,j_u)\prec (v,i)$ (and hence $\dep(u,j_u)\le\dep(v,i)$) or $j_u=0$. 
Intuitively, this observation indicates that every spin in the neighborhood configuration $\tau$ for resolving an update $(v,i)$ either results from a previous update or is part of the initial configuration.

We now prove the lemma by induction on depth.

Induction Basis:
For the updates $(v,i)$ with depth $\dep(v,i)=1$, there is no $(u,j)$ such that $(u,j)\prec(v,i)$.
Therefore, by the above observation,
the neighborhood configuration $\tau\in Q^{N_v^+}$ constructed in \Cref{alg:line:neighborhood-config} for resolving the update $(v,i)$ always satisfies $\tau_u= \CurConfig{\widehat{X}}{\ell-1}{u}[0] =X_0(u)$ for every $u\in N_v^+$.
This implies that $\CurConfig{\widehat{X}}{\ell}{v}[i] \gets \Sample\left(P_v^{\tau},\RandSeed_{(v,i)}\right)$ stops changing after the first iteration, proving the lemma for $\dep(v,i)=1$.

Induction Step:
Now, assume the lemma holds for all updates $(u,j)$ with $\dep(u,j)<\dep(v,i)$,
and consider an update $(v,i)$ with $\dep(v,i)>1$.
By the same observation and the induction hypothesis, 
the neighborhood configuration $\tau\in Q^{N_v^+}$ constructed in \Cref{alg:line:neighborhood-config} for resolving the update $(v,i)$ stops changing after $\dep(v,i)-1$ iterations.
Consequently, $\CurConfig{\widehat{X}}{\ell}{v}[i] \gets \Sample\left(P_v^{\tau},\RandSeed_{(v,i)}\right)$ stops changing after $\dep(v,i)$ iterations, finishing the induction.
\end{proof}

\label{sec:BP-correctness}
We now prove \Cref{item:BP-correctness} of \Cref{lemma:BP-convergence}, which addresses the correctness of \Cref{alg:BP-simulation}.
Let $\widehat{X}$ denote the $\widehat{X}^{(\ell)}$ in \Cref{alg:BP-simulation} when the algorithm terminates.
Recall that $(X_t)_{0\le t\le T}$ represents the continuous-time chain, which is fully determined by $X_0$, $\mathfrak{T}$ and $\mathfrak{R}$.
\Cref{item:BP-correctness} of \Cref{lemma:BP-convergence} can be restated as follows:

\begin{lemma}\label{lemma:faithful-simulation}
For each update $(v,i)$, where $v\in V$ and $1\leq i\leq m_v$, it holds that
\[
\widehat{X}_{v}[i]=X_{t_i^v}(v).
\]	
\end{lemma}
\begin{proof}	
As the fixpoint of \Cref{alg:BP-simulation}, 
the array $\widehat{X}$ satisfies that for every update $(v,i)$ where $v\in V$ and $1\leq i\leq m_v$,
\[
\widehat{X}_{v}[i] = \Sample\left(P_v^{\tau},\RandSeed_{(v,i)}\right),
\]
where $\tau\in Q^{N_v^+}$ is such that for every $u\in N_v^+$, $\tau_u=\widehat{X}_{u}[j_u]$ for $j_u=\max\{{j}\ge 0\mid t_j^u<t_i^v\}$.

By definition of the continuous-time chain $(X_t)_{0\le t\le T}$, for every update $(v,i)$ where $v\in V$ and $1\leq i\leq m_v$,
\[
X_{t_i^v}(v) = \Sample\left(P_v^{\tau},\RandSeed_{(v,i)}\right),
\]
where $\tau\in Q^{N_v^+}$ is such that for every $u\in N_v^+$, $\tau_u=X_{t_{j_u}^u}(u)$ for $j_u=\max\{{j}\ge 0\mid t_j^u<t_i^v\}$.

Additionally, $\widehat{X}_{v}[0]=X_0(v)$ for all $v\in V$. The lemma follows by inductively verifying the equality $\widehat{X}_{v}[i]=X_{t_i^v}(v)$ through the topological order of $D_\prec$ defined in \eqref{dag-def}.
\end{proof}

\subsection{Fast Average-case Convergence}\label{sec:BP-fast}
In a continuous-time chain $(X_t)_{t\in\mathbb{R}_{\ge 0}}$, the update schedule $\mathfrak{T}$ is randomly generated by independent Poisson clocks.
\Cref{alg:BP-simulation} terminates in significantly fewer steps with such a randomly generated update schedule.
\begin{lemma}[\cite{feng2021distributed,hayes2007general}]\label{lemma:unconditional-convergence-bound}
For a random update schedule $\mathfrak{T}$ where $\mathfrak{T}=(T_i^v)_{v\in V,0\le i\le M_v}$ is generated up to time $T$ by the $n$ independent rate-1 Poisson clocks on graph $G$ of maximum degree $\Delta$,
the probability that there exists an update $(v,i)$ with $\dep(v,i)\ge \ell$ is at most $n\cdot\left(\frac{\mathrm{e}(\Delta+1)T}{\ell}\right)^{\ell}$ for any $\ell\ge 1$.
\end{lemma}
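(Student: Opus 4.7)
My plan is to apply a first-moment bound via Markov's inequality on the random variable $X_\ell$ counting the chains of length $\ell$ in the partial order $\preceq$ on updates. By the recursive definition of $\dep(v,i)$, the event that some update has $\dep(v,i) \ge \ell$ coincides with the event $X_\ell \ge 1$, so it suffices to show that $\mathbb{E}[X_\ell] \le n \cdot (\mathrm{e}(\Delta+1)T/\ell)^\ell$.

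The first step is to decompose each chain $(v_1, j_1) \prec (v_2, j_2) \prec \cdots \prec (v_\ell, j_\ell)$ into two independent-looking pieces: (i) its underlying walk $v_1, \ldots, v_\ell$ in $G$, where each $v_i \in N_{v_{i+1}}^+$ (so consecutive vertices are either adjacent or equal); and (ii) the tuple of strictly increasing ring times $s_i = t_{j_i}^{v_i}$ with $s_1 < \cdots < s_\ell$ lying in $[0,T]$. A direct enumeration shows that there are at most $n(\Delta+1)^{\ell-1}$ such walks, since $v_1$ has $n$ choices and each subsequent $v_{i+1}$ has at most $|N_{v_i}^+| \le \Delta+1$ choices.

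The second step is to bound, for any fixed walk, the expected number of admissible ring-time tuples. By the independence of the rate-$1$ Poisson processes at distinct sites, together with the identity that the $k$-th factorial moment of a rate-$1$ Poisson process on $[0,T]$ equals $T^k$, this expectation evaluates to
\[
\int_{0 < s_1 < s_2 < \cdots < s_\ell < T} \mathrm{d} s_1 \, \mathrm{d} s_2 \cdots \mathrm{d} s_\ell \;=\; \frac{T^\ell}{\ell!},
\]
independently of whether or not the walk revisits a site. Summing over walks, applying Stirling's inequality $\ell! \ge (\ell/\mathrm{e})^\ell$, and invoking Markov's inequality then yields the claimed bound.

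The main point that demands care is the factorial-moment step in the case when the walk revisits a site: one must reconcile the distinctness requirement $j_i \ne j_{i'}$ with the strict ordering $s_i < s_{i'}$ on the corresponding ring times. This turns out to be routine because the Poisson ring times at any single site are almost surely distinct, so ordered tuples of distinct indices correspond bijectively to ordered tuples of distinct times and the integral remains $T^\ell/\ell!$. This is precisely the monotone-path calculation of \cite{hayes2007general}, adapted here to walks on $G$ with self-loops.
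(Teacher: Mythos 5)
Your proof is correct and takes essentially the same route as the paper: the paper also enumerates walks in $G$ with $v_{i+1}\in N_{v_i}^+$ (counting at most $n(\Delta+1)^\ell$ of them), bounds the per-walk contribution by $(\mathrm{e}T/\ell)^\ell$, and applies a union bound, which is the same first-moment calculation you perform. The only difference is expository: the paper outsources the per-walk bound to~[Observation~3.2, \cite{hayes2007general}] rather than deriving $T^\ell/\ell!$ from scratch via the factorial-moment/Mecke identity as you do, and your careful treatment of walk revisits makes the argument self-contained.
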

\begin{proof}
Fix any path $(v_1,\ldots,v_{\ell})\in V^\ell$ where $v_{i+1}\in N_{v_i}^+$ for $1\le i<\ell$.
Consider the event where there exist $0<t_1<\cdots<t_\ell<T$ such that the Poisson clock at $v_i$ rings at time $t_i$.
As shown in \cite[Observation 3.2]{hayes2007general}, the probability of this event is at most  $\le\left(\frac{\mathrm{e}T}{\ell}\right)^{\ell}$.
By applying the union bound over all possible paths, where there are at most $\le n(\Delta+1)^{\ell}$ such paths, the probability that there is a path with this property is bounded by $n\cdot\left(\frac{\mathrm{e}(\Delta+1)T}{\ell}\right)^{\ell}$.
\end{proof}

Combining \Cref{lemma:round-depth-bound,lemma:unconditional-convergence-bound}, we conclude that for a randomly generated update schedule  $\mathfrak{T}$ up to time $T$, \Cref{alg:BP-simulation} will terminate within $\left\lceil2\mathrm{e}(\Delta+1)T+\log_2\left(\frac{n}{\epsilon}\right)\right\rceil$ iterations with probability at least $1-\epsilon$.

We now prove \Cref{lemma:BP-fast} and \Cref{lemma:BP-fast2}, establishing the faster convergence of \Cref{alg:BP-simulation} when the Dobrushin's influence matrix has a bounded operator norm.
We will actually prove slightly stronger results, for which we first need to introduce the following variant of Dobrushin's criterion.
\begin{definition}[Dobrushin's influence matrix for sampling]\label{def:sampling-matrix}
Given a single-site dynamics on a graph $G=(V,E)$ specified by $\{P_v^{\tau}\}$ and realized by a $\Sample(\cdot,\cdot)$ subroutine,
the \emph{Dobrushin's influence matrix for sampling} is a matrix $\SampleMat\in\mathbb{R}_{\ge 0}^{V\times V}$ defined by:
\begin{align}
\forall u,v\in V:\quad  \SampleMat(u,v)\triangleq \max_{\tau,\tau':\tau\oplus\tau'\subseteq\{u\}} \Pr\left[
\Sample(P_v^{\tau},\RandSeed) \neq \Sample(P_v^{\tau'},\RandSeed)
\right],\label{eq:sampling-matrix}
\end{align}
where the maximum is taken over all pairs of configurations $\tau,\tau'\in Q^{N_v^+}$ on the inclusive neighborhood $N_v^+=N_v\cup\{v\}$ of $v$,  such that $\tau$ and $\tau'$ agree with each other everywhere except at $u$.
\end{definition}

We also define the following variant of \Cref{cond:main}, tailored for the Dobrushin's influence matrix for sampling.
\begin{condition}
\label{cond:sampling}
There is a $p\in[1,\infty]$ such that 
$\|\SampleMat\|_p\le \theta$ for some $\theta>0$.
\end{condition}

It is straightforward to observe that $\SampleMat\le\alpha\InfMat$ entry-wise, assuming that the $\Sample(\cdot,\cdot)$ subroutine is $\alpha$-competitive (as defined in \Cref{def:competitive-sample}), where $\InfMat$ denotes the Dobrushin's influence matrix  (as defined in \Cref{def:dobrushin-matrix}).
Consequently, we have  $\|\SampleMat\|_p\le\alpha\|\InfMat\|_p$ for any $p\in[1,\infty]$, since $\SampleMat$ and $\InfMat$ are non-negative matrices and $\SampleMat\le\alpha\InfMat$ entry-wise.

Then, \Cref{lemma:BP-fast} and \Cref{lemma:BP-fast2} are implied by the following lemma.
\begin{lemma}[{fast convergence of \Cref{alg:BP-simulation}}]\label{lemma:BP-fast-sampling}
The following results hold for \Cref{alg:BP-simulation} on any initial configuration $X_0\in Q^V$ and and random $(\mathfrak{T},\mathfrak{R})$ up to time $T$:
\begin{enumerate}
    \item \label{lemma:BP-fast-sampling-linear-speedup}(linear speedup) 
    If \Cref{cond:sampling} holds  with parameter $\theta$, then for any $\epsilon>0$, \Cref{alg:BP-simulation} terminates within $O\left(\theta\cdot T+\log \left(\frac{n}{\epsilon}\right)\right)$ iterations with probability at least $1-\epsilon$.
    \item \label{lemma:BP-fast-sampling-exp-speedup}(exponential speedup) 
    If \Cref{cond:sampling} holds  with parameter $\theta<1$, then for any $\epsilon>0$, \Cref{alg:BP-simulation} terminates within $O\left(\frac{1}{1-\theta}\log\left(\frac{nT}{\epsilon}\right)\right)$ iterations with probability at least $1-\epsilon$.
\end{enumerate}
\end{lemma}

To see that \Cref{lemma:BP-fast} and \Cref{lemma:BP-fast2} are indeed consequences of \Cref{lemma:BP-fast-sampling},  observe that \Cref{cond:main} with parameter $\rho$, together with an $\alpha$-competitive $\Sample(\cdot,\cdot)$ subroutine, implies \Cref{cond:sampling} with parameter $\theta=\alpha\rho$. 

We now prove \Cref{lemma:BP-fast-sampling}.
Fix an update schedule $\mathfrak{T}=(t_i^v)_{v\in V,0\le i\le m_v}$. 
This defines a directed acyclic graph $D_\prec$ as in~\eqref{dag-def}.
For a \emph{chain $\pi$ of size} $\ell$ in $D_\prec$:
\begin{align*}
\pi: (v_1,i_1) \prec (v_2,i_2) \prec \cdots \prec (v_\ell,i_\ell),
\end{align*}
define its \emph{weight} as follows: For $\ell=1$  set $w(\pi)\triangleq1$; and for $\ell>1$, define
\begin{align}
\label{influence-func}
w(\pi) \triangleq \prod_{i=1}^{\ell-1}\SampleMat({v_i},{v_{i+1}}).
\end{align}

\begin{lemma}
\label{lem:all-path-sum}
Fix an arbitrary update schedule $\mathfrak{T}=(t_i^v)_{v\in V,0\le i\le m_v}$.
For any update $(v,i)$, 
and for any $\ell\geq 1$, we have
\begin{align}
    \Pr\left[ \CurConfig{\widehat{X}}{\ell}{v}[i] \neq \CurConfig{\widehat{X}}{\ell-1}{v}[i] \right] \leq 
    \sum_{\substack{\text{chain $\pi$ of size $\ell$}\\\text{that ends at }(v,i)
    }}w(\pi),\label{eq:branching-upper-bound}
\end{align}
where the probability is taken over the random $\mathfrak{R}=(\RandSeed_i^v)_{v\in V,1\le i\le m_v}$.
\end{lemma}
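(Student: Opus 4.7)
The plan is to induct on $\ell$. The base case $\ell=1$ is immediate: the only chain of size $1$ ending at $(v,i)$ is $\pi=((v,i))$ with $w(\pi)=1$, so the RHS equals $1$ and trivially bounds any probability.

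For the inductive step, I would condition on all random bits in $\mathfrak{R}$ except $\RandSeed_{(v,i)}$. This freezes the arrays $\CurConfig{\widehat{X}}{\ell-1}{}$ and $\CurConfig{\widehat{X}}{\ell-2}{}$ (neither depends on $\RandSeed_{(v,i)}$, since every update contributing to these two arrays is distinct from $(v,i)$ and its random seed is independent of $\RandSeed_{(v,i)}$), and hence freezes the two neighborhood configurations $\tau,\tau'\in Q^{N_v^+}$ built in \Cref{alg:line:neighborhood-config} when resolving $(v,i)$ in iterations $\ell$ and $\ell-1$. Applying $\alpha$-competitiveness of $\Sample$ (\Cref{def:competitive-sample}) pointwise in $(\tau,\tau')$ and then a one-coordinate-at-a-time hybrid bound for $\dtv$ against the Dobrushin influence matrix gives
\begin{align*}
\Pr_{\RandSeed_{(v,i)}}\!\left[\Sample(P_v^{\tau},\RandSeed_{(v,i)})\neq\Sample(P_v^{\tau'},\RandSeed_{(v,i)})\right] \le \alpha\cdot\dtv(P_v^{\tau},P_v^{\tau'}) \le \sum_{u\in N_v^+}\alpha\cdot\Inf{u}{v}\cdot I[\tau_u\neq\tau'_u].
\end{align*}

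Now I would take expectations over the remaining randomness. For every $u\in N_v^+$ we have $\tau_u=\CurConfig{\widehat{X}}{\ell-1}{u}[j_u]$ and $\tau'_u=\CurConfig{\widehat{X}}{\ell-2}{u}[j_u]$ with $j_u=\max\{j\ge 0:t_j^u<t_i^v\}$, and when $j_u=0$ both values equal $X_0(u)$ deterministically, so the disagreement event is impossible. This yields
\begin{align*}
\Pr\!\left[\CurConfig{\widehat{X}}{\ell}{v}[i]\neq\CurConfig{\widehat{X}}{\ell-1}{v}[i]\right] \le \sum_{u\in N_v^+:\,j_u\ge 1} \alpha\cdot\Inf{u}{v}\cdot\Pr\!\left[\CurConfig{\widehat{X}}{\ell-1}{u}[j_u]\neq\CurConfig{\widehat{X}}{\ell-2}{u}[j_u]\right].
\end{align*}
Applying the induction hypothesis to each factor on the right and observing that prepending a size-$(\ell-1)$ chain ending at $(u,j_u)$ with the step $(u,j_u)\prec(v,i)$ yields a size-$\ell$ chain ending at $(v,i)$ whose weight gains exactly the factor $\alpha\cdot\Inf{u}{v}$, the right-hand side becomes a sum of weights over a subfamily of the size-$\ell$ chains ending at $(v,i)$ (those whose penultimate element has the ``latest-before-$t_i^v$'' form $(u,j_u)$ for some $u\in N_v^+$). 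Since all weights are nonnegative, this is bounded by the full sum on the RHS of~\eqref{eq:branching-upper-bound}, closing the induction.

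The main point requiring care is the conditioning: $\alpha$-competitiveness is a pointwise statement in $(\tau,\tau')$, so I must first freeze every random bit except $\RandSeed_{(v,i)}$ and only then integrate. The independence of $\RandSeed_{(v,i)}$ from the other random bits makes the integration clean: the conditional bound splits into the deterministic factor $\alpha\cdot\Inf{u}{v}$ times the marginal disagreement probability at $(u,j_u)$, which is precisely the quantity the induction hypothesis controls.
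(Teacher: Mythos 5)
Your proposal is correct and follows essentially the same route as the paper: condition out $\RandSeed_{(v,i)}$, apply $\alpha$-competitiveness to the frozen pair $(\tau,\tau')$, telescope the total variation distance one coordinate at a time against the Dobrushin matrix, integrate, and close the induction by extending chains. One small imprecision in your justification: the full arrays $\widehat{X}^{(\ell-1)}$ and $\widehat{X}^{(\ell-2)}$ do depend on $\RandSeed_{(v,i)}$ (their $(v,i)$-th entries are assigned with it); what is actually frozen by your conditioning are the specific entries $\widehat{X}^{(\ell-1)}_u[j_u]$, $\widehat{X}^{(\ell-2)}_u[j_u]$ for $u\in N_v^+$, since those correspond to updates strictly earlier than $t_i^v$ and hence depend only on $\mathfrak{R}_{<t_i^v}$, which is disjoint from $\RandSeed_{(v,i)}$ — this is precisely the conditioning the paper uses and the one that makes the ``frozen'' claim exact.
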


\begin{proof}
The lemma is proved by establishing the following inequality for all $\ell\ge 2$:
\begin{align}
 \Pr\left[ \CurConfig{\widehat{X}}{\ell}{v}[i] \neq \CurConfig{\widehat{X}}{\ell-1}{v}[i] \right] 
&\leq \sum_{(u,j):(u,j)\prec(v,i)}  \SampleMat({u},{v})\cdot\Pr\left[ \CurConfig{\widehat{X}}{\ell-1}{u}[j] \neq \CurConfig{\widehat{X}}{\ell-2}{u}[j] \right],\label{eq:branching-recursion}
\end{align}
where the sum is taken over all updates $(u,j)$ in $\mathfrak{T}$, where $u\in V$, $1\le j\le m_u$ such that $(u,j)\prec (v,i)$.

The lemma follows from \eqref{eq:branching-recursion} by induction.
Induction Basis: For $\ell=1$,
the right hand side of \eqref{eq:branching-upper-bound} becomes $w((v,i))=1$. Thus,  $\Pr\left[ \CurConfig{\widehat{X}}{1}{v}[i] \neq \CurConfig{\widehat{X}}{0}{v}[i] \right]\le 1$, and hence \eqref{eq:branching-upper-bound} holds.
Induction Step:
Assume that \eqref{eq:branching-upper-bound} holds for chains of size $\ell-1$. To prove \eqref{eq:branching-upper-bound} for chains of size $\ell$, consider \eqref{eq:branching-recursion}. We have:
\begin{align*}
\Pr\left[ \CurConfig{\widehat{X}}{\ell}{v}[i] \neq \CurConfig{\widehat{X}}{\ell-1}{v}[i] \right]
&\le 
\sum_{(u,j):(u,j)\prec(v,i)}  \SampleMat({u},{v}) \sum_{\substack{\text{chain $\pi$ of size $\ell-1$}\\\text{that ends at }(u,j)}}w(\pi)\\
&\le
 \sum_{\substack{\text{chain $\pi$ of size $\ell$}\\\text{that ends at }(v,i)}}w(\pi).
\end{align*}
This completes the induction and proves the lemma.
It remains to prove \eqref{eq:branching-recursion}.

For any $\ell\ge 1$, the variable $\CurConfig{\widehat{X}}{\ell}{v}[i]$ is updated in \Cref{alg:line:BP-update} of \Cref{alg:BP-simulation}.
 Specifically, it is assigned as:
\[
\CurConfig{\widehat{X}}{\ell}{v}[i]\gets \Sample\left(P_v^{\tau^{(\ell)}},\RandSeed_{(v,i)}\right),
\]
where $\tau^{(\ell)}\in Q^{N_v^+}$ is constructed such that for each $u\in N_v^+$:
\begin{align}
\tau^{(\ell)}_u = \CurConfig{\widehat{X}}{\ell-1}{u}[j_u], \quad\text{ where }j_u\triangleq \max\{j\ge 0\mid t_j^u<t_i^v\}.\label{eq:neighborhood-construction}
\end{align}
This means that $\tau^{(\ell)}$ is constructed based on the most recent update time before $t_i^v$ for each neighbor $u$ of $v$.

We use $\mathfrak{R}_{<t}$ to denote the random bits used for resolving updates before time $t$. Formally, we define:
\[
\mathfrak{R}_{<t}\triangleq(\RandSeed_{(v,i)})_{v\in V,1\le i\le m_v,t_{i}^v<t}.
\]
Observe that  for any $\ell\ge 0$, the value of $\CurConfig{\widehat{X}}{\ell}{v}[i]$ is fully determined by $\mathfrak{R}_{<t}$ if $t_i^v<t$.

Let $\tau^{(\ell)}$ and $\tau^{(\ell-1)}$ be constructed as in \eqref{eq:neighborhood-construction}.
Both $\tau^{(\ell)}$ and $\tau^{(\ell-1)}$ 
are determined by $\mathfrak{R}_{<t_i^v}$ and are independent of the random choice of $\RandSeed_{(v,i)}$.
Consider the vertices in $N_v^+$, denoted as $u_1,u_2,\ldots,u_k$, where $k=|N_v^+|$.
We define a sequence of configurations  $\tau_0,\tau_1,\ldots,\tau_k\in Q^{N_v^+}$
that transitions  from $\tau_0=\tau^{(\ell-1)}$ to $\tau_k=\tau^{(\ell)}$  by modifying one spin at a time. 
Specifically, for each $1\leq i\leq k$, 
$\tau_i$ is obtained by changing the spin of $u_i$ in $\tau_{i-1}$ from $\tau^{(\ell-1)}_{u_i}$ to $\tau^{(\ell)}_{u_i}$.
Note that $\tau_{i-1}$ and $\tau_i$ might identical if $\tau^{(\ell-1)}_{u_i}=\tau^{(\ell)}_{u_i}$.

By the triangle inequality, we have: 
\begin{align}
&\Pr\left[ \CurConfig{\widehat{X}}{\ell}{v}[i] \neq \CurConfig{\widehat{X}}{\ell-1}{v}[i] \,\Big{|}\, \mathfrak{R}_{<t_i^v}\right] \notag\\
= &
\Pr\left[ \Sample\left(P_v^{\tau^{(\ell)}},\RandSeed_{(v,i)}\right)\neq \Sample\left(P_v^{\tau^{(\ell-1)}},\RandSeed_{(v,i)}\right) \,\Big{|}\, \mathfrak{R}_{<t_i^v}\right]\notag\\
\le &
\sum_{i=1}^k\Pr\left[ \Sample\left(P_v^{\tau_{i-1}},\RandSeed_{(v,i)}\right)\neq \Sample\left(P_v^{\tau_{i}},\RandSeed_{(v,i)}\right) \,\Big{|}\, \mathfrak{R}_{<t_i^v}\right]\notag\\
= &
\sum_{i=1}^k
\Pr\left[ \tau_{i-1} \neq \tau_i \,\Big{|}\, \mathfrak{R}_{<t_i^v}\right]
\Pr\left[ \Sample\left(P_v^{\tau^{(\ell)}},\RandSeed_{(v,i)}\right)\neq \Sample\left(P_v^{\tau^{(\ell-1)}},\RandSeed_{(v,i)}\right) \,\Big{|}\, \tau_{i-1} \neq \tau_i\right]
\notag\\
\le &
\sum_{i=1}^k \Pr\left[ \tau_{i-1} \neq \tau_i \,\Big{|}\, \mathfrak{R}_{<t_i^v}\right]\cdot \SampleMat(u_i,v) \notag\\
= & \sum_{u\in N^+_v}\Pr\left[\CurConfig{\widehat{X}}{\ell-1}{u}[j_u]\neq \CurConfig{\widehat{X}}{\ell-2}{u}[j_u]\,\Big{|}\, \mathfrak{R}_{<t_i^v}\right]\cdot \SampleMat(u,v),
\label{eq:branching-recursion-2}
\end{align}
where the last inequality follows because the sequence
$\tau^{(\ell-1)}=\tau_0,\tau_1,\ldots,\tau_k=\tau^{(\ell)}$ is constructed such that
$\tau_{i-1}$ and $\tau_i \in Q^{N^+_v}$ differ only at site $u_i$
and by \Cref{def:sampling-matrix}, the probability that $\Sample\left(P_v^{\tau_{i}},\RandSeed_{(v,i)}\right)$ differs from $\Sample\left(P_v^{\tau_{i-1}},\RandSeed_{(v,i)}\right)$ is upper bounded by $\SampleMat(u_i,v)$ 
for any pair $\tau_{i-1},\tau_i$ that differ only at site $u_i$.

Observe that in \eqref{eq:branching-recursion-2},
the sum is actually taken over all updates $(u,j_u)$ such that $(u,j_u)\prec(v,i)$ (with $j_u\ge 1$, since otherwise $\CurConfig{\widehat{X}}{\ell-1}{u}[0]= \CurConfig{\widehat{X}}{\ell-2}{u}[0]=X_0(u)$).
Therefore, we have
\begin{align}
\Pr\left[ \CurConfig{\widehat{X}}{\ell}{v}[i] \neq \CurConfig{\widehat{X}}{\ell-1}{v}[i] \,\Big{|}\, \mathfrak{R}_{<t_i^v}\right] \le \sum_{\substack{(u,j):(u,j)\prec (v,i)}}\SampleMat(u,v)\cdot\Pr\left[\CurConfig{\widehat{X}}{\ell-1}{u}[j]\neq \CurConfig{\widehat{X}}{\ell-2}{u}[j]\,\Big{|}\, \mathfrak{R}_{<t_i^v}\right]. \notag
\end{align}
Then \eqref{eq:branching-recursion} follows by averaging over all $\mathfrak{R}_{<t_i^v}$ using the law of total probability.
\end{proof}

Using \Cref{lem:all-path-sum}, we can prove \Cref{lemma:BP-fast-sampling}, which provides upper bounds on the fast convergence of  \Cref{alg:BP-simulation}, by bounding the expected total weights for all chains of fixed size.

\begin{proof}[Proofs of \Cref{lemma:BP-fast-sampling}]
Let $\mathfrak{T}=(T_i^v)_{v\in V,0\le i\le M_v}$ be a random update schedule generated by $n$ independent rate-1 Poisson clocks  up to time $T$.
Consider the binary relation $\prec$ defined in~\eqref{order-def} and the dependency graph $D_{\prec}$ defined in~\eqref{dag-def}, both derived from this $\mathfrak{T}$.  

Due to \Cref{lem:all-path-sum}, by applying a union bound over all updates $(v,i)$ and averaging over $\mathfrak{T}$, we have
\begin{align}
\Pr\left[\,\widehat{X}^{(\ell)}\neq \widehat{X}^{(\ell-1)}\,\right]
\le 
\mathbb{E}_{\mathfrak{T}}\left[\,\sum_{\text{chain $\pi$ of size $\ell$}}w(\pi)\,\right],\label{eq:BP-fast-1}
\end{align}
where the weight function $w(\cdot)$ is defined in \eqref{influence-func}.

It remains to upper bound the expected total weight referred to in \eqref{eq:BP-fast-1}. 
Let $M=\sum_{v\in V}M_v$ denote the total number of updates in the schedule $\mathfrak{T}=(T_i^v)_{v\in V,0\le i\le M_v}$.
It follows the Poisson distribution with mean $nT$, which is given by:
\[
\forall m\ge 0,\quad  \Pr[M=m]=\mathrm{e}^{-nT}\frac{(nT)^m}{m!}.
\]

Conditioning on $M=m$ for a fixed integer $m\ge 1$, we can sort all $m$ updates according to their respective times as: $0<T_{I_1}^{U_1}<T_{I_2}^{U_2}<\cdots<T_{I_m}^{U_m}<T$. This defines a random sequence of updates:
\begin{align*}
(U_1,\ID_1),(U_2,\ID_2),\ldots,(U_m,\ID_m)\in V\times \{1,2,\ldots,m\}, 
\end{align*}
where $(U_i,\ID_i)$ represents the $i$-th update  in global time order.
Due to the independence and symmetry of the Poisson clocks,  the sequence $\boldsymbol{U}=(U_1,\ldots,U_m)\in V^m$ is uniformly distributed.

Given a sequence $\boldsymbol{U}=(U_1,\ldots,U_m)\in V^m$ of fixed length $m$, 
for any increasing subsequence $1\le j_1<\cdots<j_\ell\le m$ with $1\le \ell\le m$, define:
\begin{equation}\label{eq:weight-function-vertices}
\begin{aligned}
W(j_1,\ldots,j_\ell)=W_{\boldsymbol{U}}(j_1,\ldots,j_\ell) \triangleq \prod_{i=1}^{\ell}\SampleMat\left({U_{j_i}},{U_{j_{i+1}}}\right). 
\end{aligned}
\end{equation}

For a random schedule $\mathfrak{T}=(T_i^v)_{v\in V,0\le i\le M_v}$ conditioned on $M=m$, and the sequence $\boldsymbol{U}=(U_1,\ldots,U_m)$ of updated sites determined by $\mathfrak{T}$,
it is straightforward to verify that for every $1\le \ell \le m$,
\begin{align}
\sum_{\text{chain $\pi$ of size $\ell$}}w(\pi)
\leq
\sum_{1\leq j_1<j_2<\cdots<j_{\ell}\leq m}W_{\boldsymbol{U}}(j_1,\ldots,j_\ell).\label{eq:BP-fast-2}
\end{align}
Let $\boldsymbol{U}=(U_1,U_2,\ldots,U_m)\in V^m$ be chosen uniformly at random.
For any $1\le j_1<\cdots<j_\ell\le m$ and any $v_1,\ldots,v_\ell\in V$, the probability that $(U_{j_1},\ldots,U_{j_\ell})=(v_1,\ldots,v_\ell)$ is $\frac{1}{n^\ell}$. 
Thus, for any $\frac{1}{p}+\frac{1}{q}=1$,
\begin{align}
{\mathbb{E}}\left[W_{\boldsymbol{U}}(j_1,\ldots,j_\ell)\right]
&=
\frac{1}{n^\ell}\sum_{v_1,\ldots,v_\ell\in V}\prod_{i=1}^{\ell-1}\SampleMat(v_{i},v_{i+1}) 
=
\frac{1}{n^\ell}\left\|\SampleMat^{\ell-1}\boldsymbol{1}\right\|_1\nonumber \\
\text{(H\"{o}lder's inequality)}\qquad
&\le
\frac{n^{1/q}}{n^\ell}\left\|\SampleMat^{\ell-1}\boldsymbol{1}\right\|_p\nonumber \\
&\le 
\frac{n^{1/q}}{n^{\ell}}\left\|\SampleMat\right\|_p^{\ell-1}\|\boldsymbol{1}\|_p \nonumber \\
&=
\frac{1}{n^{\ell-1}}\left\|\SampleMat\right\|_p^{\ell-1}.
\label{eq:Holder-inequality}
\end{align}
By \Cref{cond:main}, it holds that $\|\SampleMat\|_p\le \theta$ for some $p\in[1,\infty]$.
Hence 
$\mathbb{E}\left[W_{\boldsymbol{U}}(j_1,\ldots,j_\ell)\right]\le \left(\frac{\theta}{n}\right)^{\ell-1}$.
Combining with \eqref{eq:BP-fast-2}, for every $1\le \ell\le m$,
\begin{align}
{\mathbb{E}}_{\mathfrak{T}}\left[\,\sum_{\text{chain $\pi$ of size $\ell$}}w(\pi)\,\Biggm{|}\,M=m\,\right]
&\le
\mathop{\mathbb{E}}_{\boldsymbol{U}\in V^m}\left[\,\sum_{1\le j_1<\cdots<j_\ell\le m}W_{\boldsymbol{U}}(j_1,\ldots,j_\ell)\,\right]  \nonumber  \\
&\le 
\binom{m}{\ell} \left(\frac{\theta}{n}\right)^{\ell-1}.
\label{eq:eq:converge-1}
\end{align}
Combined with \eqref{eq:BP-fast-1},
for any $\ell\ge 1$,  
the probability that \Cref{alg:BP-simulation} does not terminate within $\ell\ge 1$ iterations is:

\begin{align*}
\Pr\left[\,\widehat{X}^{(\ell)}\neq \widehat{X}^{(\ell-1)}\,\right]
&\le
\sum_{m\ge 0}\Pr[M=m]\cdot \mathbb{E}_{\mathfrak{T}}\left[\,\sum_{\text{chain $\pi$ of size $\ell$}}w(\pi)\,\Biggm{|}\,M=m\,\right]\\
&\le 
\sum_{m\ge \ell}
\mathrm{e}^{-nT}\frac{(nT)^m}{m!}{m\choose \ell}\left(\frac{\theta}{n}\right)^{\ell-1}\\
\text{(take $k=m-\ell$)}\qquad
&=
\frac{nT}{\ell!}(\theta T)^{\ell-1}
\sum_{k\ge 0}\mathrm{e}^{-nT}\frac{(nT)^k}{k!}\\
\mbox{($(\ell-1)!\ge\left(\frac{\ell-1}{\mathrm{e}}\right)^{\ell-1}$)}\qquad
&\le
\frac{nT}{\ell} \left(\frac{\mathrm{e}\theta T}{\ell-1}\right)^{\ell-1}.
\end{align*}
To ensure this is at most $\epsilon$,
set
$\ell=\left\lceil(2\mathrm{e}\theta+1)T+\log_2\left(\frac{n}{\epsilon}\right)+1\right\rceil$.
This proves \Cref{lemma:BP-fast-sampling-linear-speedup} of \Cref{lemma:BP-fast-sampling}.

Next, to prove \Cref{lemma:BP-fast-sampling-exp-speedup} of \Cref{lemma:BP-fast-sampling}, consider a more refined upper bound than \eqref{eq:BP-fast-2}:
\begin{align}
\sum_{\text{chain $\pi$ of size $\ell$}}w(\pi)
\leq 
\sum_{1\leq j_1<\cdots<j_{\ell}\leq m}I\left[(U_{j_1},\ID_{j_1}) \prec \cdots \prec (U_{j_\ell},\ID_{j_\ell})\right]\cdot W_{\boldsymbol{U}}(j_1,\ldots,j_\ell).\label{eq:BP-fast-2-stricter}
\end{align}
Thus, for every $1\le \ell\le m$, we have
\begin{align}
& {\mathbb{E}}_{\mathfrak{T}}\left[\,\sum_{\text{chain $\pi$ of size $\ell$}}w(\pi)\,\Biggm{|}\,M=m\,\right] \notag \\
\le &
\mathop{\mathbb{E}}_{\boldsymbol{U}\in V^m}\left[\,\sum_{1\le j_1<\cdots<j_\ell\le m}I\left[ (U_{j_1},\ID_{j_1})\prec\cdots\prec(U_{j_\ell},\ID_{j_\ell}) \right]W_{\boldsymbol{U}}(j_1,\ldots,j_\ell)\,\right]  \notag  \\
= &
\sum_{1\leq j_1<\cdots<j_\ell\leq m} \sum_{(v_1,\ldots,v_\ell)\in V^{\ell}} \Pr_{\boldsymbol{U}\in V^m}\left[ (U_{j_1},\ID_{j_1})\prec\cdots\prec(U_{j_\ell},\ID_{j_\ell}) \wedge (U_{j_1},\ldots,U_{j_\ell})=(v_1,\ldots,v_\ell) \right] \notag \\
& \quad  \cdot \mathop{\mathbb{E}}_{\boldsymbol{U}\in V^m}\left[\, W_{\boldsymbol{U}}(j_1,\ldots,j_\ell) \mid (U_{j_1},\ID_{j_1})\prec\cdots\prec(U_{j_\ell},\ID_{j_\ell}) \wedge (U_{j_1},\ldots,U_{j_\ell})=(v_1,\ldots,v_\ell) \,\right]  \notag \\
= &
\sum_{1\leq j_1<\cdots<j_\ell\leq m} \sum_{(v_1,\ldots,v_\ell)\in V^{\ell}} 
\Pr_{\boldsymbol{U}\in V^m}\left[ (U_{j_1},\ID_{j_1})\prec\cdots\prec(U_{j_\ell},\ID_{j_\ell}) \mid (U_{j_1},\ldots,U_{j_\ell})=(v_1,\ldots,v_\ell) \right] \notag \\
& \quad \cdot \Pr_{\boldsymbol{U}\in V^m}\left[ (U_{j_1},\ldots,U_{j_\ell})=(v_1,\ldots,v_\ell) \right] \cdot  W_{(v_1,\ldots,v_\ell)}(j_1,\ldots,j_\ell).
\label{eq:eq:converge-2}
\end{align}
Here, the inequality follows from \eqref{eq:BP-fast-2-stricter}, and the equations are due to the total expectation and total probability.

Let $\boldsymbol{U}=(U_1,U_2,\ldots,U_m)\in V^m$ be chosen uniformly at random.
For any $1\leq i<j\leq m$, we have $(U_i,\ID_i)\prec (U_j,\ID_j)$ only if $U_k\neq U_i$ for all $i<k<j$. Therefore, for any $1\leq j_1<\cdots<j_{\ell}\leq m$ and $(v_1,v_2,\ldots,v_{\ell})\in V^{\ell}$, we have
\begin{align}
    \Pr\left[(U_{j_1},\ID_{j_1})\prec\cdots\prec(U_{j_\ell},\ID_{j_\ell}) \biggm{|} (U_{j_1},\ldots,U_{j_\ell})=(v_1,\ldots,v_\ell)\right] 
    &\leq \prod_{i=1}^{\ell-1}\left(1-\frac{1}{n}\right)^{j_{i+1}-j_i-1} 
    = \left(1-\frac{1}{n}\right)^{j_\ell-j_1-(\ell-1)}. \label{eq:dependent-prob}
\end{align}

By \cref{cond:main}, combined with \eqref{eq:Holder-inequality}, \eqref{eq:eq:converge-2} and \eqref{eq:dependent-prob}, we obtain that for every $1\le \ell\le m$, 
\begin{align*}
& {\mathbb{E}}_{\mathfrak{T}}\left[\,\sum_{\text{chain $\pi$ of size $\ell$}}w(\pi)\,\Biggm{|}\,M=m\,\right]  \\
\le &
\sum_{1\leq j_1<\cdots<j_\ell\leq m} \sum_{(v_1,\ldots,v_\ell)\in V^{\ell}} 
\Pr_{\boldsymbol{U}\in V^m}\left[ (U_{j_1},\ID_{j_1})\prec\cdots\prec(U_{j_\ell},\ID_{j_\ell}) \mid (U_{j_1},\ldots,U_{j_\ell})=(v_1,\ldots,v_\ell) \right] \\
& \quad \cdot \Pr_{\boldsymbol{U}\in V^m}\left[ (U_{j_1},\ldots,U_{j_\ell})=(v_1,\ldots,v_\ell) \right] \cdot  W_{(v_1,\ldots,v_\ell)}(j_1,\ldots,j_\ell) \\
\le &
\sum_{1\leq j_1<\cdots<j_\ell\leq m} \left(1-\frac{1}{n}\right)^{j_\ell-j_1-(\ell-1)} \sum_{(v_1,\ldots,v_\ell)\in V^{\ell}} \Pr_{\boldsymbol{U}\in V^m}\left[ (U_{j_1},\ldots,U_{j_\ell})=(v_1,\ldots,v_\ell) \right] \cdot  W_{(v_1,\ldots,v_\ell)}(j_1,\ldots,j_\ell)  \\
= &
\sum_{1\leq j_1<\cdots<j_\ell\leq m} \left(1-\frac{1}{n}\right)^{j_\ell-j_1-(\ell-1)} \mathop{\mathbb{E}}_{\boldsymbol{U}\in V^m}\left[\, W_{\boldsymbol{U}}(j_1,\ldots,j_\ell)\,\right]   \\
\le &
\sum_{1\leq j_1<\cdots<j_\ell\leq m} \left(1-\frac{1}{n}\right)^{j_\ell-j_1-(\ell-1)}\left(\frac{\|\SampleMat\|_p}{n}\right)^{\ell-1}\\
\le &
\sum_{1\leq j_1<\cdots<j_\ell\leq m} \left(1-\frac{1}{n}\right)^{j_\ell-j_1-(\ell-1)}\left(\frac{\theta}{n}\right)^{\ell-1}.
\end{align*}
If $\theta<1$, then combined with \eqref{eq:BP-fast-1}, we have
\begin{align*}
\Pr\left[\,\widehat{X}^{(\ell)}\neq \widehat{X}^{(\ell-1)}\,\right]
&\le
\sum_{m\ge 0}\Pr[M=m]\cdot \mathbb{E}_{\mathfrak{T}}\left[\,\sum_{\text{chain $\pi$ of size $\ell$}}w(\pi)\,\Biggm{|}\,M=m\,\right]\\
&\le 
\sum_{m\ge 0}\Pr[M=m]\cdot \sum_{1\leq j_1<\cdots<j_\ell\leq m} \left(1-\frac{1}{n}\right)^{j_\ell-j_1-(\ell-1)}\left(\frac{\theta}{n}\right)^{\ell-1}\\
\text{(take $k_i=j_{i+1}-j_i-1$)}\qquad &\le 
\sum_{m\ge 0}\Pr[M=m]\cdot \sum_{1\leq j_1\leq m} \sum_{k_1,k_2,\ldots,k_{\ell-1}\geq 0} \left(1-\frac{1}{n}\right)^{\sum_{i=1}^{\ell-1}k_i} \left(\frac{\theta}{n}\right)^{\ell-1} \\
&\le
\sum_{m\ge 0}\Pr[M=m]\cdot mn^{\ell-1}\left(\frac{\theta}{n}\right)^{\ell-1} \\
&= \mathbb{E}[m] \theta^{\ell-1}\\
&= nT \theta^{\ell-1}.
\end{align*}
Taking $\ell = \left\lceil \frac{1}{1-\theta}\ln\left(\frac{nT}{\epsilon}\right) \right\rceil+1$, the probability that \Cref{alg:BP-simulation} does not terminate within $\ell\ge 1$ iterations is at most $\epsilon$. 
This proves \Cref{lemma:BP-fast-sampling-exp-speedup} of \Cref{lemma:BP-fast-sampling}.
\end{proof}


\section{Universal Coupling via Correlated Sampling} \label{sec:universal-coupling}
In this section, we present the {universal coupling} claimed in \Cref{thm:2-competitive}.

Let $\Delta(\Omega)$ denote the \emph{probability simplex} on the sample space $\Omega$. Formally,
\[
\Delta(\Omega)\triangleq \left\{p\in[0,1]^\Omega \bigm{|} \mbox{$\sum_{x\in\Omega}p(x)=1$}\right\}.
\]
Each $p\in\Delta(\Omega)$ represents a probability distribution over $\Omega$, where the probability of $x\in \Omega$ is $p(x)$.

\begin{definition}[{universal coupling}]\label{def:universal-coupling}
A deterministic function $\Sample:\Delta(\Omega)\times [0,1]\to \Omega$ is a \emph{universal coupling} on sample space $\Omega$ 
if, 
when $\RandSeed\in[0,1]$ is chosen uniformly at random,
for any distribution $p\in\Delta(\Omega)$ and $x\in \Omega$, 
\[
\Pr_{\RandSeed}\left[\,\Sample(p,\RandSeed)=x\,\right]=p(x).
\]
\end{definition}

A universal coupling simultaneously couples all distributions over $\Omega$. 
Our objective is to achieve a universal coupling with small competitiveness (as defined in \Cref{def:competitive-sample}), meaning that for any pair of distributions $p,q\in\Delta(\Omega)$, the probability $\Pr[\Sample(p,\RandSeed)\neq \Sample(q,\RandSeed)]$ should be as close as possible to the best achievable value $\dtv(p,q)$.

Such a universal coupling with small competitiveness can be achieved by the following sampling procedure.
Note that the uniform random $\RandSeed\in[0,1]$ is mapped to an infinite sequence $(X_1,Y_1),(X_2,Y_2),\ldots$, where each $(X_i,Y_i)\in\Omega\times[0,1]$ is chosen uniformly and independently at random. 
\begin{algorithm}[ht]
 \SetKwInOut{Input}{Input} \SetKwInOut{Output}{Output} 
\Input{ $p\in\Delta(\Omega)$; $\RandSeed=((X_1,Y_1),(X_2,Y_2),\ldots)$, where each $(X_i,Y_i)\in\Omega\times[0,1]$.}
\Output{ a sample in $\Omega$.}
let $i^*$ be the smallest $i\ge 1$ such that $Y_i<p(X_i)$\;
\Return{$X_{i^*}$}\; 
\caption{$\Sample(p,\RandSeed)$\label{alg:sample}}
\end{algorithm}

\begin{remark}
This sampling procedure has been independently discovered in various contexts to solve different problems.
Notably, it was identified by Kleinberg and Tardos~\cite{kleinberg2002approximation} for rounding linear programs, 
and by Holenstein~\cite{holenstein2007parallel} to simplify Raz's proof of parallel repetition theorem.
As Charikar~\cite{charikar2002similarity} pointed out, this sampling strategy generalizes Broder's MinHash strategy~\cite{broder1997resemblance} for universal coupling of uniform distributions.
The optimality of such sampling strategies was further studied in~\cite{bavarian2020optimality}.
In that work and its follow-up studies, such as~\cite{bun2023stability,ghazi2021user,sudan2019communication,karbasi2023replicability,kalavasis2023statistical,angel2019pairwise}, sampling strategies satisfying \Cref{def:universal-coupling} were referred to as \emph{correlated sampling} strategies.
\end{remark}

The correctness, efficiency, and competitiveness of this sampling procedure are ensured by the following lemmas:
\begin{lemma}[{correctness and efficiency}]\label{lemma:sample-correctness-efficiency}
For any $p\in\Delta(\Omega)$, with random $\RandSeed=((X_1,Y_1),(X_2,Y_2),\ldots)$ 
where each $(X_i,Y_i)\in \Omega\times[0,1]$ is chosen uniformly and independently at random:
\begin{enumerate}
\item\label{sample-correctness} For any $x\in\Omega$, the sampling procedure returns $x$ with probability $p(x)$, i.e., $\Pr\left[\,\Sample(p,\RandSeed)=x\,\right]=p(x)$;
\item\label{sample-efficiency}  The index $i^*$ chosen in \Cref{alg:sample} follows a geometric distribution a success probability of $1/|\Omega|$.
\end{enumerate}
\end{lemma}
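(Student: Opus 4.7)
My plan is to treat this as a standard rejection sampling argument, dispatching part (2) first because it feeds directly into part (1).

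For part (\ref{sample-efficiency}), I would fix an index $i \ge 1$ and compute the probability that trial $i$ succeeds, i.e., that $Y_i < p(X_i)$. Since $X_i$ is uniform on $\Omega$ and independent of $Y_i \sim \mathrm{Unif}[0,1]$, conditioning on $X_i = x$ gives $\Pr[Y_i < p(x)] = p(x)$, so
\[
\Pr[Y_i < p(X_i)] = \sum_{x \in \Omega} \frac{1}{|\Omega|} \cdot p(x) = \frac{1}{|\Omega|}.
\]
Because the pairs $(X_i, Y_i)$ are mutually independent across $i$, the trials are i.i.d.\ Bernoulli with success probability $1/|\Omega|$. Hence $i^*$, the index of the first success, is geometric with parameter $1/|\Omega|$, and in particular finite with probability one (so \Cref{alg:sample} is well-defined almost surely).

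For part (\ref{sample-correctness}), I would condition on the value of $i^*$. For any fixed $i \ge 1$ and any $x \in \Omega$,
\[
\Pr[X_{i^*} = x,\, i^* = i] = \Pr[Y_1 \ge p(X_1), \ldots, Y_{i-1} \ge p(X_{i-1})] \cdot \Pr[X_i = x,\, Y_i < p(X_i)],
\]
by independence of the $(X_j, Y_j)$'s. The first factor equals $(1 - 1/|\Omega|)^{i-1}$ by the computation above, and the second factor equals $\frac{1}{|\Omega|} p(x)$. Summing over $i \ge 1$ gives
\[
\Pr[X_{i^*} = x] = \frac{p(x)}{|\Omega|} \sum_{i \ge 1} \left(1 - \frac{1}{|\Omega|}\right)^{i-1} = p(x),
\]
which is the desired identity.

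There is no real obstacle here; the only subtlety is to justify that the event $\{i^* < \infty\}$ has probability $1$ before conditioning on it, which is immediate from part (\ref{sample-efficiency}) since a geometric random variable with positive success probability is almost surely finite. Both parts thus reduce to elementary independence arguments, and the proof should be essentially a few lines long.
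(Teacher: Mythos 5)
Your proof is correct and follows essentially the same route as the paper's: both compute $\Pr[Y_i < p(X_i)] = 1/|\Omega|$ by conditioning on $X_i$, deduce that $i^*$ is geometric from independence, and then argue that the first success yields a sample from $p$. The only cosmetic difference is that the paper phrases part~(1) as computing the conditional distribution $\Pr[X=x \mid Y<p(X)]$ directly, whereas you sum over the value of $i^*$ using the geometric series; these are the same rejection-sampling argument.
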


\begin{lemma}[{coupling performance}]\label{lemma:sample-coupling-performance} 
For any $p,q\in\Delta(\Omega)$, 
with random $\RandSeed=((X_1,Y_1),(X_2,Y_2),\ldots)$ where each $(X_i,Y_i)\in \Omega\times[0,1]$ is chosen uniformly and independently at random:
\begin{align}
\label{eq:universal-couple-coalesce}
\Pr\left[\,\Sample(p,\RandSeed)= \Sample(q,\RandSeed)\,\right]\ge\frac{1-\dtv(p,q)}{1+\dtv(p,q)}.
\end{align}
\end{lemma}

\Cref{thm:2-competitive} follows immediately from  \cref{lemma:sample-correctness-efficiency} and \cref{lemma:sample-coupling-performance}, 
because  it holds that 
\[
\Pr\left[\,\Sample(p,\RandSeed)\neq \Sample(q,\RandSeed)\,\right]\le\frac{2\dtv(p,q)}{1+\dtv(p,q)}\le 2\dtv(p,q).
\]

\begin{remark}
The coupled probability in \eqref{eq:universal-couple-coalesce} corresponds to the following \emph{Jaccard similarity}: 
\[
\frac{1-\dtv(p,q)}{1+\dtv(p,q)}
=
\frac{\sum_{x\in \Omega}\min(p(x),q(x))}{\sum_{x\in \Omega}\max(p(x),q(x))}
=
\frac{\left|P\cap Q\right|}{\left|P\cup Q\right|}.
\]
This measures the similarity between two points $P,Q \subseteq \Omega \times [0,1]$ 
naturally constructed from the distributions $p,q\in\Delta(\Omega)$, respectively,
as:
$P\triangleq\left\{(x,y)\mid x\in\Omega\land y< p(x)\right\}$
and $Q\triangleq\left\{(x,y)\mid x\in\Omega\land y< q(x)\right\}$.
An illustration of these point sets $P$ and $Q$  is provided in the following figure.

\definecolor{color1}{RGB}{180,30,0}
\definecolor{color2}{RGB}{0,90,180}

\begin{center}
\begin{tikzpicture}[scale=0.77]
\draw[line width=1.2pt, ->] (0,0) -- (10,0); 
\draw[line width=1.2pt, ->] (0,0) -- (0,5); 
\node[anchor=mid] at (4.900,-0.500) {$x$ \tikz[baseline]{(0,0)--(.10,0);}};
\node[anchor=mid] at (-0.350,5.000) {$\mu$ \tikz[baseline]{(0,0)--(.10,0);}};

\node[anchor=mid] at (3.660,0.900) {$P \cap Q$ \tikz[baseline]{(0,0)--(.10,0);}};
\node[anchor=mid] at (1.360,2.600) {$P \setminus Q$ \tikz[baseline]{(0,0)--(.10,0);}};
\node[anchor=mid] at (6.760,2.200) {$Q \setminus P$ \tikz[baseline]{(0,0)--(.10,0);}};

\draw [opacity=0.0, fill=color1, fill opacity=0.18, smooth, samples=50]
(0,0)
-- (0.000, 4.800)
-- (0.900, 4.800)
-- (0.900, 4.447)
-- (1.800, 4.447)
-- (1.800, 4.424)
-- (2.700, 4.424)
-- (2.700, 2.682)
-- (3.600, 2.682)
-- (3.600, 2.682)
-- (4.500, 2.682)
-- (4.500, 0.871)
-- (5.400, 0.871)
-- (5.400, 1.059)
-- (6.300, 1.059)
-- (6.300, 0.941)
-- (7.200, 0.941)
-- (7.200, 0.824)
-- (8.100, 0.824)
-- (8.100, 0.800)
-- (9.000, 0.800)
-- (9.000, 0.000)
-- cycle;

\draw [opacity=0.0, fill=color2, fill opacity=0.18, smooth, samples=50]
(0,0)
-- (0.000, 0.918)
-- (0.900, 0.918)
-- (0.900, 0.424)
-- (1.800, 0.424)
-- (1.800, 0.871)
-- (2.700, 0.871)
-- (2.700, 1.506)
-- (3.600, 1.506)
-- (3.600, 3.459)
-- (4.500, 3.459)
-- (4.500, 3.200)
-- (5.400, 3.200)
-- (5.400, 3.224)
-- (6.300, 3.224)
-- (6.300, 3.388)
-- (7.200, 3.388)
-- (7.200, 2.612)
-- (8.100, 2.612)
-- (8.100, 3.929)
-- (9.000, 3.929)
-- (9.000, 0.000)
-- cycle;
    
\color{color1}
\draw [line width=1.3pt] (0.000, 4.800) -- (0.900, 4.800);
\draw [line width=1.3pt] (0.900, 4.447) -- (1.800, 4.447);
\draw [line width=1.3pt] (1.800, 4.424) -- (2.700, 4.424);
\draw [line width=1.3pt] (2.700, 2.682) -- (3.600, 2.682);
\draw [line width=1.3pt] (3.600, 2.682) -- (4.500, 2.682);
\draw [line width=1.3pt] (4.500, 0.871) -- (5.400, 0.871);
\draw [line width=1.3pt] (5.400, 1.059) -- (6.300, 1.059);
\draw [line width=1.3pt] (6.300, 0.941) -- (7.200, 0.941);
\draw [line width=1.3pt] (7.200, 0.824) -- (8.100, 0.824);
\draw [line width=1.3pt] (8.100, 0.800) -- (9.000, 0.800);
\draw [line width=1.1pt] (0.900, 4.800) -- (0.900, 4.447);
\draw [line width=1.1pt] (1.800, 4.447) -- (1.800, 4.424);
\draw [line width=1.1pt] (2.700, 4.424) -- (2.700, 2.682);
\draw [line width=1.1pt] (3.600, 2.682) -- (3.600, 2.682);
\draw [line width=1.1pt] (4.500, 2.682) -- (4.500, 0.871);
\draw [line width=1.1pt] (5.400, 0.871) -- (5.400, 1.059);
\draw [line width=1.1pt] (6.300, 1.059) -- (6.300, 0.941);
\draw [line width=1.1pt] (7.200, 0.941) -- (7.200, 0.824);
\draw [line width=1.1pt] (8.100, 0.824) -- (8.100, 0.800);
\node[anchor=mid] at (9.250,0.800) {$p$ \tikz[baseline]{(0,0)--(.10,0);}};

\draw [dashed] (0.900, 0.000) -- (0.900, 4.447);
\draw [dashed] (1.800, 0.000) -- (1.800, 4.424);
\draw [dashed] (2.700, 0.000) -- (2.700, 2.682);
\draw [dashed] (3.600, 0.000) -- (3.600, 2.682);
\draw [dashed] (4.500, 0.000) -- (4.500, 0.871);
\draw [dashed] (5.400, 0.000) -- (5.400, 0.871);
\draw [dashed] (6.300, 0.000) -- (6.300, 0.941);
\draw [dashed] (7.200, 0.000) -- (7.200, 0.824);
\draw [dashed] (8.100, 0.000) -- (8.100, 0.800);

\color{color2}
\draw [line width=1.3pt] (0.000, 0.918) -- (0.900, 0.918);
\draw [line width=1.3pt] (0.900, 0.424) -- (1.800, 0.424);
\draw [line width=1.3pt] (1.800, 0.871) -- (2.700, 0.871);
\draw [line width=1.3pt] (2.700, 1.506) -- (3.600, 1.506);
\draw [line width=1.3pt] (3.600, 3.459) -- (4.500, 3.459);
\draw [line width=1.3pt] (4.500, 3.200) -- (5.400, 3.200);
\draw [line width=1.3pt] (5.400, 3.224) -- (6.300, 3.224);
\draw [line width=1.3pt] (6.300, 3.388) -- (7.200, 3.388);
\draw [line width=1.3pt] (7.200, 2.612) -- (8.100, 2.612);
\draw [line width=1.3pt] (8.100, 3.929) -- (9.000, 3.929);
\draw [line width=1.1pt] (0.900, 0.918) -- (0.900, 0.424);
\draw [line width=1.1pt] (1.800, 0.424) -- (1.800, 0.871);
\draw [line width=1.1pt] (2.700, 0.871) -- (2.700, 1.506);
\draw [line width=1.1pt] (3.600, 1.506) -- (3.600, 3.459);
\draw [line width=1.1pt] (4.500, 3.459) -- (4.500, 3.200);
\draw [line width=1.1pt] (5.400, 3.200) -- (5.400, 3.224);
\draw [line width=1.1pt] (6.300, 3.224) -- (6.300, 3.388);
\draw [line width=1.1pt] (7.200, 3.388) -- (7.200, 2.612);
\draw [line width=1.1pt] (8.100, 2.612) -- (8.100, 3.929);
\node[anchor=mid] at (9.250,3.929) {$q$ \tikz[baseline]{(0,0)--(.10,0);}};

\draw [dashed] (0.900, 0.000) -- (0.900, 0.424);
\draw [dashed] (1.800, 0.000) -- (1.800, 0.424);
\draw [dashed] (2.700, 0.000) -- (2.700, 0.871);
\draw [dashed] (3.600, 0.000) -- (3.600, 1.506);
\draw [dashed] (4.500, 0.000) -- (4.500, 3.200);
\draw [dashed] (5.400, 0.000) -- (5.400, 3.200);
\draw [dashed] (6.300, 0.000) -- (6.300, 3.224);
\draw [dashed] (7.200, 0.000) -- (7.200, 2.612);
\draw [dashed] (8.100, 0.000) -- (8.100, 2.612);
\end{tikzpicture}
\end{center}

A recent study \cite{bavarian2020optimality} has proven that this is essentially optimal for universal coupling.
\end{remark}

\cref{lemma:sample-correctness-efficiency,lemma:sample-coupling-performance}
have been established in multiple independent works~\cite{kleinberg2002approximation,holenstein2007parallel} within their respective contexts. 
For completeness, we provide the formal proofs of these lemmas here.

\begin{proof}[proof of \cref{lemma:sample-correctness-efficiency}]
Let $(X,Y)\in\Omega\times[0,1]$ be chosen uniformly at random.
For any function $w:\Omega\to[0,1]$,
\begin{align}
\Pr[Y < w(X)] 
= \sum_{x\in \Omega}\Pr[X=x\wedge Y<w(x)]
=\frac{1}{|\Omega|}\sum_{x\in \Omega} w(x).\label{eq:sample-success-prob}
\end{align}
In particular, for $p\in\Delta(\Omega)$, we have $\Pr[Y < p(X)]={1}/{|\Omega|}$ since $\sum_{x\in\Omega}p(x)=1$.
Since the $(X_i,Y_i)$ pairs are independent, this proves \Cref{sample-efficiency}  of the lemma: $i^*$ follows a geometric distribution with a success probability of $1/|\Omega|$.

For any $p\in\Delta(\Omega)$ and $x\in\Omega$,
\begin{align*}
    \Pr[X=x \mid Y<p(X)] 
    = \frac{\Pr[X=x \wedge Y<p(x)]}{\Pr[Y<p(X)]}
    = \frac{{p(x)}/{|\Omega|}}{{1}/{|\Omega|}} = p(x),
\end{align*}
where the second equation follows from the fact that $\Pr[Y < p(X)]={1}/{|\Omega|}$.
Since the output of $\Sample(p,\RandSeed)$ follows the distribution of $X$ conditional on $Y<p(X)$,
this proves \Cref{sample-correctness}  of the lemma: $\Sample(p,\RandSeed)$ always returns a correct sample from $p$ for every  $p\in\Delta(\Omega)$.
\end{proof}

\begin{proof}[proof of \cref{lemma:sample-coupling-performance}]
Let $(p\cap q):\Omega\to[0,1]$ and $(p\cup q):\Omega\to[0,1]$ be two functions defined as: 
\begin{align*}
\forall x\in\Omega:\quad
    (p\cap q)(x) = \min\left(p(x),q(x)\right)
\quad\text{ and }\quad
    (p\cup q)(x) = \max\left(p(x),q(x)\right).
\end{align*}
Clearly, $(p\cap q)(x)\le (p\cup q)(x)$, and it is straightforward to verify that
\begin{align*}
\sum_{x\in \Omega} (p\cap q)(x) 
	= 
	1 - \dtv(p,q)
\quad\text{ and }\quad
\sum_{x\in \Omega} (p\cup q)(x) 
	= 
	1 + \dtv(p,q).
\end{align*}
By \eqref{eq:sample-success-prob}, 
for $(X,Y)$ chosen uniformly at random from $\Omega\times[0,1]$, we have 
\begin{align*}
\Pr[Y<(p\cap q)(X)] 
= 
\frac{1 - \dtv(p,q)}{|\Omega|}
\quad\text{ and }\quad
\Pr[Y<(p\cup q)(X)] 
= 
\frac{1 + \dtv(p,q)}{|\Omega|},
\end{align*}
and thus,
\begin{align}
\Pr[Y<(p\cap q)(X) \mid Y<(p\cup q)(X)]
&=
\frac{\Pr[Y<(p\cap q)(X)]}{\Pr[Y<(p\cup q)(X)]}
=\frac{1-\dtv(p,q)}{1+\dtv(p,q)}.\label{eq:Jaccard-quantity-0}
\end{align}
Let $\RandSeed=((X_1,Y_1),(X_2,Y_2),\ldots)$ be an infinite sequence where each $(X_i,Y_i)\in \Omega\times[0,1]$ is chosen uniformly and independently at random.
Define:
\begin{align*}
I_p^*
=
\min\{i\mid Y_i<p(X_i)\}
\quad\text{ and }\quad
I_q^*
=
\min\{i\mid Y_i<q(X_i)\}.
\end{align*}
Observe that 
$\min\left(I_p^*,I_q^*\right)=\min\{i\mid Y_i<(p\cup q)(X_i)\}$, and
\begin{align}
I_p^*= I_q^*
\implies
\Sample(p,\RandSeed)= \Sample(q,\RandSeed).\label{eq:observation-coalesce-implication}
\end{align}
Then, for every $i\ge 1$,
\begin{align}
&\Pr\left[I_p^*=I_q^*=i\mid \min\left(I_p^*,I_q^*\right)=i\right]\notag\\
=
&\Pr\left[Y_i<(p\cap q)(X_i)\land \left(\bigwedge_{j<i}Y_j\ge (p\cup q)(X_j)\right)\biggm{|}  Y_i<(p\cup q)(X_i)\land \left(\bigwedge_{j<i}Y_j\ge (p\cup q)(X_j)\right)\right]\notag\\
=
&\Pr\left[Y_i<(p\cap q)(X_i) \mid Y_i<(p\cup q)(X_i) \right]\label{eq:Jaccard-quantity-0.5}\\
=
&\frac{1-\dtv(p,q)}{1+\dtv(p,q)},\label{eq:Jaccard-quantity-1}
\end{align}
where \eqref{eq:Jaccard-quantity-0.5} follows from the independence of different $(X_i,Y_i)$ pairs and the fact that:
\[
Y_j\ge (p\cup q)(X_j)\implies Y_j\ge (p\cap q)(X_j),
\]
and \eqref{eq:Jaccard-quantity-1} follows from \eqref{eq:Jaccard-quantity-0}.
Therefore, 
\begin{align}
\Pr\left[I_p^*= I_q^*\right]
&=
\sum_{i\ge 1}\Pr\left[\min\left(I_p^*,I_q^*\right)=i\right]\cdot\Pr\left[I_p^*= I_q^*=i\mid \min\left(I_p^*,I_q^*\right)=i\right]\notag\\
&=
\frac{1-\dtv(p,q)}{1+\dtv(p,q)}\cdot \sum_{i\ge 1}\Pr\left[\min\left(I_p^*,I_q^*\right)=i\right]\label{eq:Jaccard-quantity-2}\\
&=
\frac{1-\dtv(p,q)}{1+\dtv(p,q)}.\notag
\end{align}
where \eqref{eq:Jaccard-quantity-2} follows from \eqref{eq:Jaccard-quantity-1}.
Finally, by the observation in \eqref{eq:observation-coalesce-implication}, 
\[
\Pr\left[\Sample(p,\RandSeed)= \Sample(q,\RandSeed)\right]
\ge 
\Pr\left[I_p^*= I_q^*\right]
=
\frac{1-\dtv(p,q)}{1+\dtv(p,q)}.\qedhere
\]
\end{proof}


\section{Parallel and Distributed Implementations}
In this section, we present implementations of the algorithm for simulating single-site dynamics.

The following theorem provides a formal restatement of \Cref{main-thm:parallel} and \Cref{main-thm:two-spin}.
\begin{theorem}[\Cref{main-thm:parallel} and \ref{main-thm:two-spin}, formally stated]\label{thm:two-spin}\label{thm:parallel}
Assume the existence of oracles for evaluating $\{P_v^{\tau}\}$, 
such that for any $v\in V$, $\tau\in Q^{N_v^+}$ and $x\in Q$, the oracle returns the probability $P_v^{\tau}(x)$.
There exists a \CRAM{} algorithm such that,
given any $X_0\in Q^V$ on graph $G=(V,E)$ and  $0<T\le n^{c_0}$, where $n=|V|$, $m=|E|$ and $q=|Q|$,
the followings hold:
\begin{enumerate}
\item\label{itm:main-parallel-correctness}
The algorithm returns a random configuration $X\in Q^V$ that is identically distributed as $X_T$ in the continuous-time single-site dynamics $(X_t)_{t\in\mathbb{R}_{\ge 0}}$ specified by $\{P_v^{\tau}\}$ on $G$ conditioned on $X_0$.
\item\label{itm:main-parallel-fast}
(linear speedup)
Assume \Cref{cond:main} with parameter $\rho$.
With probability $1-n^{-c_1}$,
the algorithm terminates within $O_{c_0,c_1}\left(\rho\cdot (T+\log n)\right)$ depth 
on $O_{c_1}\left(\left(m+nq^2\log^2 n\right)\log n\right)$ processors. 
\item\label{itm:main-two-spin}
(exponential speedup)
Assume $|Q|=2$ and \Cref{cond:main} with parameter $\rho<1$.
With probability $1-n^{-c_1}$,
the algorithm terminates within $O_{c_1}\left(\frac{1}{1-\rho}(\log T+\log n)\right)$ depth 
on $O_{c_1}\left(mT\right)$ processors.
\end{enumerate} 
\end{theorem}

We first describe the algorithm that works for general finite domain $Q$ and achieves the linear speedup stated in \Cref{itm:main-parallel-fast} of \Cref{thm:two-spin}.
For this case, we use \Cref{alg:sample} on sample space $Q$ as the $\Sample(\cdot,\cdot)$ subroutine used in \Cref{alg:line:BP-update} of \Cref{alg:BP-simulation}.

\begin{algorithm}[ht]
 \SetKwInOut{Input}{Input} \SetKwInOut{Output}{Output} 
\KwData{ integers $k\ge 0$, $w\ge 1$; $\RandSeed=((X_1,Y_1),\ldots,(X_{k w},Y_{k w}))$, where $(X_i,Y_i)\in Q\times[0,1]$.}
  \SetKwFunction{FInit}{$\mathsf{Initialize}$}
  \SetKwProg{Fn}{Function}{:}{}
  \Fn{\FInit{$W$}}{
    \Input{ an integer $W\ge 1$}
        $k\gets 0$, $w\gets W$, and $\RandSeed\gets()$ is initialized to an empty sequence\;
  }
  \SetKwFunction{FDraw}{$\mathsf{Draw}$}
  \Fn{\FDraw{$p$}}{
  \Input{ a distribution $p$ over $Q$.}
\Output{ a random spin in $Q$ distributed as $p$.}
  $\ell\gets 0$\;
  \While{true}{
  \If{$\ell= k$}{
  generate $\left((X_{\ell\cdot w+1},Y_{\ell\cdot w+1}),\ldots,(X_{(\ell+1)\cdot w},Y_{(\ell+1)\cdot w})\right)$, each $(X_i,Y_i)\in Q\times[0,1]$ chosen uniformly and independently at random, and append it to the end of $\RandSeed$\;
  $k\gets k+1$\;
  }
  compute $i^*=\min\left\{i\mid \ell\cdot q+1\le i\le  (\ell+1)\cdot q\land Y_i<p(X_i)\right\}\cup\{\infty\}$\;\tcc{uses $O(1)$ depth and $O(w^2)$ processors in the CRCW PRAM}
  \lIf{$i^*<\infty$}{\Return{$X_{i^*}$}}
  $\ell\gets \ell+1$\;
  }
  }
\caption{Dynamic data structure $\Sampler$\label{alg:sampler-DS}}
\end{algorithm}

\Cref{alg:sample} can be parallelized straightforwardly. 
The sample space is the set $Q$ of all spins.
In each round, 
for given \emph{width} $w\ge 1$,
the algorithm checks in parallel the next $w$ pairs $(X_i,Y_i)\in Q\times[0,1]$ in the sequence $(X_1,Y_1),(X_2,Y_2),\ldots$, to find the pair $(X_i,Y_i)$ with the smallest $i$ that satisfies $Y_i<p(X_i)$ 
(which, in \CRAM{}, uses $O(1)$ depth and $O(w^2)$ processors~\cite{karp1991parallel}) 
and returns the $X_i$.
The algorithm proceeds to the next round if no such pair $(X_i,Y_i)$ is found.

To avoid generating an infinite sequence of random numbers at the beginning of the algorithm, we may apply the \emph{principle of deferred decisions}:
a random number is drawn only when it is accessed by the algorithm.
However,
it is crucial to ensure that the randomness used by the algorithm is {consistent}.
Specifically, the resolution of the same update $(v,i)$  in \Cref{alg:line:BP-update} of \Cref{alg:BP-simulation} must use the same random bits $\RandSeed_{(v,i)}$ every time.
The $\Sample$ subroutine is implemented using a dynamic data structure described in \Cref{alg:sampler-DS},
which generates random bits the first time they are needed and memorizes them for subsequent uses.

It is straightforward to verify that
\Cref{alg:sampler-DS} satisfies the correctness and coupling performance stated in \Cref{lemma:sample-correctness-efficiency} and  \Cref{lemma:sample-coupling-performance}. 
Additionally, according to \eqref{eq:sample-success-prob} and the independence between the pairs $(X_i,Y_i)$, the number of iterations of the \textbf{while} loop in \Cref{alg:sampler-DS} before termination follows a geometric distribution with success probability $1-(1-1/q)^{w}\ge 1-\mathrm{e}^{-w/q}$, where $q=|Q|$.

\begin{algorithm}[ht]
 \SetKwInOut{Input}{Input} \SetKwInOut{Output}{Output} 
\Input{ initial configuration $X_0\in Q^V$ and time $T>0$.}
\Output{ a random $X\in Q^V$ identically distributed as the $X_T$ in the continuous-time single-site dynamics $(X_t)_{t\in\mathbb{R}_{\ge 0}}$ specified by $\{P_v^{\tau}\}$ conditioning on $X_0$.}
\ForAllPar{$v\in V$}{
$T_0^v\gets0$\;
generate $0<T_1^v<\cdots<T_{M_v}^v<T$ by independently simulating a rate-1 Poisson process\;\label{alg:line-generate-times}
}
\ForAllPar{updates $(v,i)$ (where $v\in V$, $1\le i\le M_v$) and $u\in N_v^+$}{compute $\mathsf{pred}_u(v,i)\triangleq\max\{j\ge 0\mid T_j^u<T_i^v\}$\;\label{alg:line-compute-predecessor}
}
\ForAllPar{updates $(v,i)$ (where $v\in V$, $1\le i\le M_v$) }{
$\CurConfig{\widehat{X}}{0}{v}[i]\gets X_0(v)$\;
create a new instance of $\Sampler$  $\mathcal{S}_{(v,i)}$ and call $\mathcal{S}_{(v,i)}.\mathsf{Initialize}(\left\lceil |Q|\ln|V|\right\rceil)$\;\label{alg:line-sample-init}
}
$\ell\gets 0$\; 
\Repeat{$\widehat{X}^{(\ell)}=\widehat{X}^{(\ell-1)}$}{\label{alg:line:repeat-loop}
	$\ell\gets \ell+1$\;
	\lForAllPar{$v\in V$}{$\CurConfig{\widehat{X}}{\ell}{v}[0]\gets X_0(v)$}
	\ForAllPar{updates $(v,i)$ (where $v\in V$, $1\le i\le M_v$) }{
		let $\tau\in Q^{N_v^+}$ be constructed as: 
	$\forall u\in N_v^+$, $\tau_u\gets\CurConfig{\widehat{X}}{\ell-1}{u}[\mathsf{pred}_u(v,i)]$\; 
		$\CurConfig{\widehat{X}}{\ell}{v}[i]\gets \mathcal{S}_{(v,i)}.\mathsf{Draw}\left(P_v^{\tau}\right)$\; \label{alg:line:call-sample}
	}
} 
\Return{$X=\left(\CurConfig{X}{\ell}{v}[M_v]\right)_{v\in V}$}\;
\caption{$\mathsf{Simulation}\left(X_0,T\right)$}\label{alg:simulation}
\end{algorithm}

We now describe our parallel implementation of the algorithm that  faithfully simulates a continuous-time single-site dynamics $(X_t)_{t\in\mathbb{R}_{\ge 0}}$ up to an arbitrarily specified time $T>0$, given the initial configuration $X_0\in Q^V$.
The implementation is based on \Cref{alg:BP-simulation} and uses \Cref{alg:sampler-DS} as the $\Sample$ subroutine.
The random choices $(\mathfrak{T},\mathfrak{R})$ used by the chain are generated and accessed internally by the algorithm.
The detailed implementation is formally described in \Cref{alg:simulation}.

\subsection{Further Reducing the Total Work}

The number of processors used in \Cref{alg:simulation} grows at least linearly with $T$.
To reduce this cost, we apply a simple optimization:  
divide the time interval $[0,T]$ into smaller time intervals, each of length $O(\log n)$, and simulate the chain over these intervals sequentially.
This approach allows us to effectively treat $T$ as $O(\log n)$ when analyzing the processor and communication costs of the algorithm,  
while not increasing the number of parallel rounds.
The final algorithm, which uses \Cref{alg:simulation} as a subroutine, is described in \Cref{alg:efficient-simulation}.

\begin{algorithm}[ht]
 \SetKwInOut{Input}{Input} \SetKwInOut{Output}{Output} 
\Input{ initial configuration $X_0\in Q^V$ and time $T>0$.}
\Output{ a random $X\in Q^V$ identically distributed as the $X_T$ in the continuous-time single-site dynamics $(X_t)_{t\in\mathbb{R}_{\ge 0}}$ specified by $\{P_v^{\tau}\}$ conditioning on $X_0$.}
\While{$T>\ln(n)$}{
$X_0\gets\mathsf{Simulation}\left(X_0,\ln(n)\right)$\; 
$T\gets T-\ln(n)$\;
}
\Return{$\mathsf{Simulation}\left(X_0,T\right)$}\;
\caption{Parallel simulation of single-site dynamics with reduced total work}\label{alg:efficient-simulation}
\end{algorithm}

\begin{proof}[Proof of \Cref{thm:two-spin}]
It is straightforward to verify that \Cref{alg:simulation} simulates the process of \Cref{alg:BP-simulation} on the same initial configuration $X_0$ and the same random choices of $(\mathfrak{T},\mathfrak{R})$ used to generate the chain $(X_t)_{0\le t\le T}$. 
By \Cref{lemma:BP-convergence}, \Cref{alg:BP-simulation} returns the correct sample of $X_T$ upon termination, and so does \Cref{alg:simulation}.
The correctness of \Cref{alg:efficient-simulation} follows from the Markov property,
thus proving \Cref{itm:main-parallel-correctness} of \Cref{thm:two-spin}.

The output of \Cref{alg:sampler-DS} is identical to that of \Cref{alg:sample} on the same input distribution and random bits, 
ensuring that \Cref{alg:sampler-DS} is 2-competitive.
Under the assumption of \Cref{cond:main} with parameter $\rho$,
and choosing $T_0\le \ln(n)$,
\Cref{lemma:BP-fast} shows that
$\mathsf{Simulation}\left(X_0,T_0\right)$ terminates within $O(\rho\cdot T_0+\log\frac{n}{\epsilon})=O(\rho\cdot\log\frac{n}{\epsilon})$ iterations of the \textbf{repeat} loop in \Cref{alg:line:repeat-loop} of \Cref{alg:simulation} with probability at least $1-\epsilon$.
Within each iteration, the algorithm makes parallel calls to \Cref{alg:sampler-DS}.
With probability at least $1-\exp(-n)$ there are $O(nT_0)=O(n\ln(n))$ parallel calls to \Cref{alg:sampler-DS} in each iteration, 
because the total number of updates follows a Poisson distribution with mean $nT_0\le n\ln(n)$. 
For each individual call to \Cref{alg:sampler-DS}, the number of rounds required follows a geometric distribution with a success probability of at least $1-\mathrm{e}^{-\ln(n)}=1-1/n$.
This is bounded by $O(1+\log_n\frac{1}{\epsilon})$ with probability at least $1-\epsilon$, while each round takes $O(1)$ depth to compute.
Apart from the \textbf{repeat} loop, the depth of $\mathsf{Simulation}\left(X_0,T_0\right)$ is dominated by the sequential computations in \Cref{alg:line-generate-times} and \Cref{alg:line-compute-predecessor}, which are both bounded by $\max_{v\in V}M_v$, the maximum number of times a Poisson clock rings at a site. 
This is bounded by $O(\log\frac{n}{\epsilon})$ with probability at least $1-\epsilon$ due to the concentration of the Poisson random variable with mean $T_0\le \ln(n)$.
Overall, the depth of $\mathsf{Simulation}\left(X_0,T_0\right)$ for a $T_0\le \ln(n)$ is bounded by $O(\rho\cdot \log\frac{n}{\epsilon}(1+\log_n\frac{1}{\epsilon}))$ with probability at least $1-\epsilon-\exp(-\Omega(n))$.
Now consider \Cref{alg:efficient-simulation}, which runs for $\ln(n)<T\le n^{c_0}$. 
It divides $T$ into $T/\ln(n)$ parts. Choosing the error as $\epsilon= n^{-c_1}/T$, and applying the union bound, with probability at least $1-n^{-c_1}$, the depth of \Cref{alg:efficient-simulation} is bounded by $O_{c_1}(\frac{T}{\ln (n)}\cdot \rho \cdot \log(nT)\cdot \log_{n}(nT))=O_{c_0,c_1}(\rho\cdot T)$ because $T\le n^{c_0}$.
Taking the maximum of the $0<T\le \ln(n)$ and $T>\ln(n)$ cases, the depth of \Cref{alg:efficient-simulation} is always upper bounded by $O_{c_0,c_1}(\rho\cdot (T+\log n))$  with probability at least $1-n^{-c_1}$. 

\Cref{alg:efficient-simulation} uses the same number of processors as the $\mathsf{Simulation}\left(X_0,T_0\right)$ in \Cref{alg:simulation} for a $T_0\le \ln(n)$. 
It suffices to bound the number of processors used for the latter. 
The number of processors used by $\mathsf{Simulation}\left(X_0,T_0\right)$ is dominated by \Cref{alg:line-compute-predecessor} and the calls to \Cref{alg:sampler-DS} in \Cref{alg:line:call-sample}.
The former uses $\sum_{v\in V}M_v\mathrm{deg}(v)=\sum_{\{u,v\}\in E}(M_u+M_v)$ processors, where $M_v$, the number of times the Poisson clocks rings at site $v$, follows a Poisson distribution with mean $T_0\le\ln(n)$. 
Hence, the number of processors used in \Cref{alg:line-compute-predecessor} is bounded by $O_{c_1}(m\log n)$ with probability $\ge 1-n^{-c_1-1}$, where $m=|E|$. 
In each iteration, the number of parallel calls to \Cref{alg:sampler-DS} in \Cref{alg:line:call-sample} is given by the number of updates $M=\sum_{v\in V}M_v$, which follows a Poisson distribution with mean $nT_0\le n\ln(n)$, and is thus bounded by $O(nT_0)=O(n\log n)$ with probability $1-\exp(-\Omega(n))$.
Each call to \Cref{alg:sampler-DS} takes $(q\ln(n))^2$ processors for an $O(1)$ depth computation of the $\min$ index in the \CRAM{} model.
Overall, the total number of processors is bounded by $O_{c_1}((m+nq^2\log^2 n)\log n)$ with probability at least $1-n^{-c_1}$. 
This proves \Cref{itm:main-parallel-fast}.

Finally, to prove \Cref{itm:main-two-spin} of \Cref{thm:two-spin}, consider \Cref{alg:simulation}, 
where the $\Sample(\cdot,\cdot)$ subroutine is realized by inverse transform sampling on the Boolean domain, as described in \Cref{def:inverse-transform-sampling}.
Specifically, in \Cref{alg:simulation}, \Cref{alg:line-sample-init} is replaced by drawing $\RandSeed_{(v,i)}\in[0,1]$ uniformly and independently at random; and \Cref{alg:line:call-sample} is replaced by:
\[
\CurConfig{\widehat{X}}{\ell}{v}[i]\gets I[\RandSeed_{(v,i)}\ge P_v^{\tau}(0)].
\]
By \Cref{lemma:BP-fast-2spin}, $\mathsf{Simulation}\left(X_0,T\right)$ terminates within $O\left(\frac{1}{1-\rho}\log\left(\frac{nT}{\epsilon}\right)\right)$ iterations of the \textbf{repeat} loop in \Cref{alg:line:repeat-loop1} of \Cref{alg:simulation}, with probability at least $1-\epsilon$. 
The total number of updates follows a Poisson distribution with mean $nT$,
so with probability at least $1-\exp(-n)$, there are $O(nT)$ updates in total.
Each iteration of the \textbf{repeat} loop in \Cref{alg:line:repeat-loop1} of \Cref{alg:BP-simulation} uses $O(1)$ depth, taking $O_{c_1}(mT)$ processors. The $O_{c_1}(mT)$ processors can be reused between iterations. 
Before entering the \textbf{repeat} loop, the calculations of the predecessors of all updates in \Cref{alg:line-compute-predecessor} of \Cref{alg:simulation} require $O_{c_1}(mT)$ processors.  Overall, with probability at least $1-n^{-c_1}$, the depth of $\mathsf{Simulation}\left(X_0,T\right)$ is $ O_{c_1}(\frac{1}{1-\rho}(\log T+\log n))$ and it uses $O_{c_1}\left(mT\right)$ processors.
This proves \Cref{itm:main-two-spin}. 
\end{proof}

\begin{remark}[{bounded precision}]
The random choices in our algorithms---specifically,  the times generated by the Poisson clocks and the real number drawn uniformly from $[0,1]$ in the sampling subroutines \Cref{alg:sample} and \Cref{alg:sampler-DS}---are assumed to be real numbers of unbounded precision  for expositional simplicity.
However, this assumption can be relaxed in practical implementations.

First, the random update times $T_i^v$'s generated by Poisson clocks do not need to be real numbers.
They are used solely to determine the relation $\prec$  between updates, as defined in \eqref{order-def}.
The relation $\prec$ is uniquely determined as long as all times generated by the Poisson clocks are distinct.
This uniqueness is assured with high probability by generating only the first $O(\log n)$ bits for the times using standard methods for simulating Poisson point processes~\cite{stoyan1995stochastic}.

Furthermore, if all probabilities in the local update distributions $\{P_v^{\tau}\}$ of the single-site dynamics are expressed with $k$-bit precision (i.e.~they are rational numbers with a denominator of $2^{k}$),
we can construct the universal coupling (correlated sampling) only for those distributions $p\in\Delta(Q)$ specifically for those distributions with $k$-bit precision.
In \Cref{alg:sample}, \Cref{alg:sampler-DS} and \Cref{def:inverse-transform-sampling}, we can replace the real numbers drawn uniformly from $[0,1]$ with uniform $k$-bit rational numbers from $[0,1)$.
The analyses in \Cref{sec:universal-coupling} remain valid with this approach, as \eqref{eq:sample-success-prob} holds for any function with $k$-bit precision.
\end{remark}

\subsection{Distributed Implementation}

The following theorem is a formal restatement of \Cref{main-thm:distributed}.
\begin{theorem}[\Cref{main-thm:distributed}, formally stated]
Assume \Cref{cond:main} with parameter $\rho$.
There is a  \CONGEST{}  algorithm on the network $G=(V,E)$ such that, given any $X_0\in Q^V$ and  $0<T\le n^{c_0}$,  where $n=|V|$ and $q=|Q|$,
the followings hold:
\begin{enumerate}
\item\label{itm:main-distributed-fast}
The algorithm terminates within $O_{c_0,c_1}\left(\rho\cdot (T+\log n)\right)$ rounds of communication, with each message containing $O_{c_1}(\log n\log q)$ bits.
\item\label{itm:main-distributed-correctness}
The algorithm returns a random $X\in Q^V$ whose distribution is within $n^{-c_1}$ total variation distance from the distribution of $X_T$ in the continuous-time single-site dynamics $(X_t)_{t\in\mathbb{R}_{\ge 0}}$ specified by $\{P_v^{\tau}\}$ conditioned on~$X_0$.
\end{enumerate} 
\end{theorem}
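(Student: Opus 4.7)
The algorithm to analyze is \Cref{alg:distributed-simulation}, applied iteratively on $\lceil T/\ln n \rceil$ consecutive time intervals of length at most $\ln n$ (analogously to \Cref{alg:efficient-simulation}, using the endpoint of each invocation as the initial configuration of the next). By the Markov property of the continuous-time chain, the composition faithfully simulates $(X_t)_{0 \le t \le T}$.

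For a single interval of length $T_0 \le \ln n$, correctness and round complexity follow from the lemmas of \Cref{sec:BP-convergence} and \Cref{sec:BP-fast}. Phase~II is exactly \Cref{alg:BP-simulation} run for $L$ iterations with the $2$-competitive universal coupling $\Sample(\cdot,\cdot)$ of \Cref{alg:sample} (whose competitiveness is guaranteed by \Cref{thm:2-competitive}). By \Cref{item:BP-correctness} of \Cref{lemma:BP-convergence}, once the BP reaches its fixed point the array $\widehat{X}$ faithfully represents the chain on $[0,T_0]$, so returning $\widehat{X}_v[M_v]$ gives a perfect sample of $X_{T_0}$. By \Cref{lemma:BP-fast}, setting $L = O(T_0 + \log(n/\epsilon)) = O(\log(n/\epsilon))$ ensures convergence within $L$ iterations with probability at least $1-\epsilon$. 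Taking $\epsilon = n^{-(c_0 + c_1 + 1)}$ and union-bounding over the at most $n^{c_0}$ invocations bounds the total variation error of the final output by $n^{-c_1}$.

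For the communication on a single interval, a Poisson tail bound and a union bound give $\max_{v} M_v = O(\log n)$ with probability $1 - n^{-\Omega(1)}$. In Phase~I, each site truncates its $M_v$ Poisson times to $O(\log n)$ bits; a birthday-type union bound shows the truncated values remain pairwise distinct across the network with high probability, which preserves the partial order $\preceq$ of \eqref{order-def} that drives the BP. Transmitting the $M_v = O(\log n)$ truncated times (each $O(\log n)$ bits) to each neighbor occupies $O(\log n)$ rounds of $O(\log n \log q)$-bit messages. In Phase~II, each BP iteration sends at most $M_u \log q = O(\log n \log q)$ bits per neighbor, fitting in a single \CONGEST{} round of the claimed bandwidth; the local random seeds $\RandSeed_{(v,i)}$ used by $\Sample(\cdot,\cdot)$ are generated and consumed on-site and need no communication. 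With $L = O(\log n)$ BP iterations, a single interval thus costs $O(\log n)$ rounds, and summing over $\lceil T/\ln n \rceil$ intervals yields the claimed $O(T + \log n)$ round bound. The main technical obstacle is the precision argument confirming that $O(\log n)$-bit truncated Poisson times almost surely preserve $\preceq$; everything else is a routine composition of the BP lemmas with Poisson tail bounds.
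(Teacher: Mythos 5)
Your proposal is correct and follows essentially the same route as the paper: decompose $[0,T]$ into $\lceil T/\ln n\rceil$ subintervals of length $\le\ln n$, run \Cref{alg:distributed-simulation} on each, invoke \Cref{lemma:BP-convergence} for correctness at the fixpoint and \Cref{lemma:BP-fast} with the $2$-competitive $\Sample$ from \Cref{thm:2-competitive} for the $O(T+\log n)$ round bound, union-bound the per-interval failure probabilities to get the $n^{-c_1}$ total variation error, and note that Poisson times need only $O(\log n)$ bits of precision to preserve the partial order $\preceq$ with high probability. The only cosmetic difference is in how the per-interval error $\epsilon$ is set ($n^{-(c_0+c_1+1)}$ vs.\ $n^{-c_1}/T$), which is inconsequential.
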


\begin{algorithm}[ht]
 \SetKwInOut{Input}{Input at $v\in V$} \SetKwInOut{Output}{Output at $v\in V$} 
 \Input{ initial spin $X_0(v)$ of site $v\in V$ and $T>0$.}
 \Output{ the spin $X(v)$ of site $v$.}
 \SetKwBlock{PhaseI}{Phase I:}{end-of-phase}
  \SetKwBlock{PhaseII}{Phase II:}{end-of-phase}
 \PhaseI{
 generate $0<T_1^v<\cdots<T_{M_v}^v<T$ by independently simulating a rate-1 Poisson process\; 
 \lForAll{neighbors $u\in N_v$}{send $(T_i^v)_{1\le i\le M_v}$ to $u$ and receive $(T_i^u)_{1\le i\le M_u}$ from $u$}\tcc{in $O(T+\log n)$ rounds of communications w.h.p.}
 \lForAll{$1\le i\le M_v$ and $u\in N_v^+$}{compute $\mathsf{pred}_u(v,i)\gets\max\{j\mid T_j^u<T_i^v\}\cup\{0\}$\label{alg:line-distributed-predecessor}}\tcc{only needs the first $O(\log n)$ bits of each $T_j^u$ w.h.p.}
 }
 \SetKwFor{ForRound}{for rounds}{do}{end}
  \PhaseII{
  generate independent random $\RandSeed_{(v,i)}$ for all $1\le i\le M_v$, \hspace{300pt} 
  where $\RandSeed_{(v,i)}=\left(\left(X_1^{(v,i)},Y_1^{(v,i)}\right),\left(X_2^{(v,i)},Y_2^{(v,i)}\right),\ldots\right)$, each $\left(X_j^{(v,i)},Y_j^{(v,i)}\right)\in Q\times[0,1]$ chosen uniformly and independently at random\;
  $\CurConfig{\widehat{X}}{0}{v}[i]\gets X_0(v)$ for all $0\le i\le M_v$\;
  \ForRound{$\ell=1$ to $L$}{\tcc{for some properly fixed $L=O(\rho\cdot T+\log n)$}
  $\CurConfig{\widehat{X}}{\ell}{v}[0]\gets X_0(v)$\;
  retrieve  $\CurConfig{\widehat{X}}{\ell-1}{u}[j]$ for all $1\le j\le M_u$ from all neighbors $u\in N_v$\;\label{alg:line-distributed-retrieve}
  \tcc{$O((T+\log n)\log q)$-bits message from each neighbor w.h.p.}
  \ForAll{$1\le i\le M_v$}{
  let $\tau\in Q^{N_v^+}$ be constructed as: $\forall u\in N_v^+$, $\tau_u\gets\CurConfig{\widehat{X}}{\ell-1}{u}[\mathsf{pred}_u(v,i)]$\;
  $\CurConfig{\widehat{X}}{\ell}{v}[i]\gets \Sample\left(P_v^{\tau},\RandSeed_{(v,i)}\right)$\;\tcc{use the $\Sample$  in \Cref{alg:sample}}
  }
  }
  \Return{$X(v)=\CurConfig{\widehat{X}}{\ell}{v}[M_v]$}\;
  }
\caption{A \CONGEST{} algorithm for simulating single-site dynamics\label{alg:distributed-simulation}}
\end{algorithm}

\begin{proof}
A \CONGEST{}  algorithm on the network $G=(V,E)$ is described in \Cref{alg:distributed-simulation}.
The algorithm consists of two phases.
In \textbf{Phase I}, each site $v\in V$ locally generates all its update times $0<T_1^v<\cdots<T_{M_v}^v<T$ up to time $T$ and exchange them with all its neighbors.
After receiving all updates times from its neighbors, $v$ can locally determine the ``predecessors'' of all its updates $(v,i)$ in the relation  $\prec$ based on the update times within its neighborhood $N_v^+$.
This phase can be completed in $O(T+\log \frac{n}{\epsilon})$ rounds with probability at least $1-\epsilon$, as the number of updates $M_v$ for each node $v$ follows a Poisson distributions with a mean of $T$.
Each message contains an update time $T_j^u$, which does not need to be a real number with infinite precision. 
As long as the predecessors $\mathsf{pred}_u(v,i)$ in \Cref{alg:line-distributed-predecessor} are correctly computed, the algorithm will run correctly. 
And this occurs with probability at least $1-\epsilon$ using a message size of $O(\log\frac{n}{\epsilon})$ bits.
In \textbf{Phase II}, \Cref{alg:distributed-simulation} simulates the process of  \Cref{alg:BP-simulation}, with each round in \Cref{alg:distributed-simulation}  corresponding to an iteration of the \textbf{repeat} loop in \Cref{alg:BP-simulation}.
The only communication required is to retrieve the current neighborhood configuration,  as specified in \Cref{alg:line-distributed-retrieve}. 
For each pair of neighbors, this involves at most $O(T+\log\frac{n}{\epsilon})$ spins and can be represented using $O((T+\log\frac{n}{\epsilon})\log q)$ bits, with probability at least $1-\epsilon$, due to the concentration of Poisson random variable.
According to \Cref{lemma:BP-fast}, with probability at least $1-\epsilon$, \textbf{Phase II} requires at most $O(\rho\cdot T+\log\frac{n}{\epsilon})$ rounds  to reach  a fixpoint. 
At this point, the algorithm achieves the $O_{c_1}(\rho\cdot T+\log n)$ round complexity  with probability at least $1-n^{-c_1}$, as stated in the theorem, although each message consists of $O_{c_1}((T+\log n)\log q)$ bits. 

To further reduce the message size to $O(\log n\log q)$ bits, we apply the same trick as in \Cref{alg:efficient-simulation}. 
For $\ln(n)<T\le n^{c_0}$, we divide $T$ into $T/\ln(n)$ many subintervals of length $T_0\le\ln (n)$ and run  \Cref{alg:distributed-simulation} consecutively for these time subintervals, such that \Cref{alg:distributed-simulation} is executed within each time subinterval for a properly fixed number of rounds without global coordination.
This approach works as long as no errors occur. 
By setting the error probability to  $\epsilon=n^{-c_1}/T$ and applying the union bound,  with probability at least $1-n^{-c_1}$, no error occurs, and the output is correct.  
 In the event of an error,  the output is treated as arbitrary, introducing a total variation distance error of $n^{-c_1}$ to the final result.
With probability at least $1-n^{-c_1}$, the algorithm that divides $T$ into subintervals and runs \Cref{alg:distributed-simulation} as subroutines terminates in $O_{c_1}(\frac{T}{\ln n}\cdot(\rho\cdot \ln n+\log (nT)))=O_{c_0, c_1}(\rho\cdot T)$ rounds since $T\le n^{c_0}$, with each message consisting of $O_{c_1}(\log n\log q)$ bits. 
In the case where $T<\ln(n)$,  with probability $\ge 1-n^{-c_1}$, \Cref{alg:distributed-simulation} terminates in $O_{c_1}(\rho\cdot T+\log n)=O_{c_1}(\rho \cdot \log n)$ rounds, with each message consisting of $O_{c_1}((T+\log n)\log q)=O_{c_1}(\log n\log q)$ bits.

Overall, with probability at least $1-n^{-c_1}$, the final algorithm (which applies \Cref{alg:distributed-simulation}  with the trick as in \Cref{alg:efficient-simulation}) terminates within $O_{c_0,c_1}(\rho\cdot (T+\log n))$ rounds, with each message consisting of $O_{c_1}(\log n\log q)$ bits.
The algorithm faithfully simulates \Cref{alg:BP-simulation}  except for an error probability of at most $n^{-c_1}$.
\end{proof}


\section{Applications}\label{sec:applications}
\subsection{Application for Hardcore and Ising Model}
\begin{proof}[Proof of \Cref{cor:Ising}]
For the hardcore models with $\lambda\le(1-\delta)\lambda_c(\Delta)=(1-\delta)\frac{(\Delta-1)^{\Delta-1}}{(\Delta-2)^{\Delta}}$,
and Ising models with $\beta\in\left(\frac{\Delta-2+\delta}{\Delta-\delta},\frac{\Delta-\delta}{\Delta-2+\delta}\right)$,
the modified log-Sobolev inequalities (MSLI) established in \cite{chen2022optimal} show that
the continuous-time Glauber dynamics has mixing time  $t^{\mathsf{C}}_{\mathsf{mix}}(\epsilon)=O_\delta\left(\log \frac{n}{\epsilon}\right)$.
For the hardcore models with $\lambda=O\left(\frac{1}{\Delta}\right)$ and Ising models with $\beta\in1\pm O\left(\frac{1}{\Delta}\right)$, 
the Dobrushin's influence matrix for the Glauber dynamics satisfies $\Inf{u}{v}=0$ for non-adjacent $u,v\in V$, 
and satisfies the following for adjacent $\{u,v\}\in E$, where the degree of $v$ is denoted as $d_v\triangleq |N_v| \leq \Delta$:
\begin{align*}
    &\text{(hardcore model)} &\Inf{u}{v} &= \frac{\lambda}{1+\lambda} = O\left(\frac{1}{\Delta}\right),\\
    &\text{(Ising model)} &\Inf{u}{v} &= \min_{0\le k\le d_v-1} \left| \frac{1}{1+\lambda \beta^{d_v-2k}}-\frac{1}{1+\lambda \beta^{d_v-2(k+1)}} \right| = O\left(\frac{1}{\Delta}\right).
\end{align*}
Thus, we have $\|\InfMat\|_\infty=O(1)$,
meaning \Cref{cond:main} holds with parameter $\rho=O(1)$.

Applying \Cref{thm:parallel} with $T=t^{\mathsf{C}}_{\mathsf{mix}}\left(\frac{1}{\mathrm{poly}(n)}\right)=O_\delta(\log n)$,   \Cref{alg:efficient-simulation} returns an approximate sample that is $1/{\mathrm{poly}(n)}$-close in total variation distance to the Gibbs distribution. This algorithm uses $O_{\delta}(\log n)$ depth on $\tilde{O}(m+n)=\tilde{O}(m)$ processors with high probability, assuming the availability of oracles to evaluate the marginal probabilities:
\begin{align*}
&\text{(hardcore model)} & \forall \tau\in\{0,1\}^{N_v},  &\quad\mu_v^\tau(1) = 
\begin{cases}
    0 & \|\tau\|_1 \ge 1\\
    \frac{\lambda}{1+\lambda} & \|\tau\|_1 = 0 
\end{cases}, \\
&\text{(Ising model)} & \forall \tau\in\{0,1\}^{N_v},  &\quad \mu_v^\tau(1) = \frac{\lambda\beta^{-{d}_v+2\|\tau\|_1}}{{1+\lambda\beta^{\|\tau\|_1/({d}_v-\|\tau\|_1)}}}.
\end{align*}
These oracles can be implemented with  $O(\log d_v)=O(\log \Delta)$ depth using $O(d_v)$ processors. 
The overhead in processor usage contributes only to the $\tilde{O}(n)$ term in the $\tilde{O}(m+n)$  total processor bound.
This results in an overall bound of $O_{\delta}(\log n \cdot \log \Delta)$ depth and $\tilde{O}(m+\sum_{v\in V}d_v)=\tilde{O}(m)$ processors.
\end{proof}

\subsection{Application for SAT Solutions}
\begin{proof}[Proof of \Cref{cor:CNF}]
Let $\mu$ denote the uniform distribution over all satisfying assignments of $\Phi$.

The sampling algorithm proposed in \cite{feng2021fast} is composed of three key steps:
\begin{enumerate}
\item 
Constructing the set $\mathcal{M} \subseteq V$ of ``marked'' variables:
A set $\mathcal{M} \subseteq V$ that satisfies \cite[Condition 3.1]{feng2021fast} is constructed using the Moser-Tardos algorithm. 
This can be parallelized via the parallel Moser-Tardos algorithm \cite{moser2010constructive}. The local lemma condition \eqref{eq:LLL-cond} is sufficiently strong to allow the construction of $\mathcal{M} \subseteq V$ in parallel while ensuring \cite[Condition 3.1]{feng2021fast}  is met.
\item
Simulate the Glauber dynamics for  $\mu_\mathcal{M}$:
The Glauber dynamics is then simulated for the distribution $\mu_\mathcal{M}$, which is the projection of $\mu$ onto $\mathcal{M}$.
This involves a single-site dynamics on the complete graph induced by  $\mathcal{M}$, 
with local update distributions $\{P_v^\tau\}$ defined as $P_v^\tau=\mu_v^{\tau_{\mathcal{M} \setminus \{v\}}}$, where $\tau$ is the current assignment on $\mathcal{M}$.
According to \cite[Lemma 4.2]{feng2021fast}, all configurations on $\mathcal{M}$ have positive measure under $\mu_\mathcal{M}$, 
ensuring that $\{P_v^\tau\}$ are well-defined. 
By \cite[Lemma 4.1]{feng2021fast}, this chain mixes within $t_{\mathsf{mix}}^{\mathsf{D}}(\epsilon)=O(n\log \frac{n}{\epsilon})$ steps, which translates to a continuous-time mixing time $t_{\mathsf{mix}}^{\mathsf{C}}(\epsilon)=O(\log \frac{n}{\epsilon})$  via \eqref{eq:continuous-discrete-mixing}.
This mixing time bound arises from a path coupling argument   \cite[Equation (11)]{feng2021fast}, 
which also ensures \Cref{cond:main} holds with parameter $\rho=O(1)$ in the $\ell_\infty$ norm.
Applying \Cref{thm:parallel}  with $T=t_{\mathsf{mix}}^{\mathsf{C}}\left(\frac{1}{\mathrm{poly}(n)}\right)=O(\log n)$, \Cref{alg:efficient-simulation} simulates this chain within $O(\log n)$ depth  using $\tilde{O}(|\mathcal{M}|^2)=\tilde{O}(n^2)$ processors, given access to oracles for evaluating $\mu_v^{\tau_{\mathcal{M} \setminus \{v\}}}$.

To draw from the marginal distribution $\mu_v^{\tau_{\mathcal{M} \setminus \{v\}}}$, \cite[Algorithm 3]{feng2021fast} is used.
During each update of the chain, with high probability, the clauses already satisfied by the current assignment on  ${\mathcal{M} \setminus \{v\}}$ disconnect the CNF into components of small sizes. 
Rejection sampling can be applied to the component containing $v$, with each trial succeeding with a probability of at least $n^{-2^{-20}}$.

\item Extending the partial assignment of ``marked'' variables to all variables:
After the Glauber dynamics on $\mu_\mathcal{M}$ has sufficiently mixed, the random assignment on $\mathcal{M}$ is extended to all variables.
This extension is also performed using \cite[Algorithm 3]{feng2021fast}, which samples from $\mu_{V\setminus\mathcal{M}}^{\tau_{\mathcal{M}}}$.
\end{enumerate}
It remains to ensure the parallelizability of: (1) The marginal sampler \cite[Algorithm 3]{feng2021fast}, and (2) the oracles to evaluate  the marginal distribution $\mu_v^{\tau_{\mathcal{M} \setminus \{v\}}}$.
For (1),  the marginal sampler \cite[Algorithm 3]{feng2021fast} draws from $\mu_S^{\tau_T}$, where $S,T\subset V$ are disjoint. 
It first constructs the connected components that intersect $S$ in the formula of clauses not yet satisfied by ${\tau_T}$,
and then draws from $\mu_S^{\tau_T}$ via rejection sampling on these components. 
The first step can be parallelized using a connected component algorithm such as \cite{shiloach1980log} and the latter is trivially parallelizable, and the subsequent rejection sampling is trivially parallelizable.
For (2), due to the Chernoff-Hoeffding bound, the marginal probabilities $\mu_v^{\tau_{\mathcal{M} \setminus \{v\}}}$ can be estimated with an additive error of $1/\mathrm{poly}(n)$ with $1-1/\mathrm{poly}(n)$ confidence by independently (and thus in parallel) repeating \cite[Algorithm 3]{feng2021fast}  $\mathrm{poly}(n)$  times. 
This introduces only $1/\mathrm{poly}(n)$ total variation error into the final output. 
\end{proof}


\section{Conclusion and Open Problems}
In this work, we present a generic parallel algorithm for faithful simulation of single-site dynamics.
Assuming a substantially weakened asymptotic variant of $\ell_p$-Dobrushin's condition,
our algorithm achieves linear speedup in  $n$
when parallelizing single-site dynamics across  $n$ sites.
If the strict $\ell_p$-Dobrushin's condition is assumed for Boolean random variables,
the parallelization achieves an exponential speedup.

The asymptotic Dobrushin's condition required for linear speedup is relatively easy to satisfy.
It essentially  requires  that the discrepancy in the system does not propagate at a super-exponential rate,
which is quite modest compared to the usual requirement for fast mixing, where discrepancy should decay.

Under this mild condition,
our parallel simulation algorithm can convert single-site dynamics with near-linear mixing time into \RNC{} algorithms for sampling, provided the marginal distributions for single-site updates are \RNC{}-computable.
Additionally, through non-adaptive simulated annealing, we can also obtain \RNC{} algorithms for approximate counting.

\paragraph{Open problems.}
The study of efficient parallel algorithms for sampling and counting is highly motivated by practical applications. 
In this paper, we address these challenges through the faithful parallel simulation of single-site dynamics and non-adaptive simulated annealing. 
Our approach is generic and well-suited for leveraging existing results on single-site Markov chains with well-studied mixing properties.
Despite the advancements provided by our approach, 
the problem of giving \RNC{} counterparts for well-known sampling and counting algorithms remains widely open.
We conclude by presenting a few concrete open problems:
\begin{itemize}
\item
We apply our result to the chain in \cite{feng2021fast}, which provides a parallel sampler for satisfying solutions of the CNFs in the local lemma regime. However, the total work of this sampler is a large polynomial. A key open question is to design a parallel sampler with polylogarithmic depth while maintaining total work comparable to the sequential sampler in \cite{feng2021fast}, which is close to linear in $n$, the number of variables.
\item
In this paper, we utilize correlated sampling to ensure both the efficiency and faithfulness of parallel simulations for single-site dynamics. To our knowledge, this marks the first application of correlated sampling in parallelizing stochastic processes. An intriguing direction for future research is how this approach can be generalized to parallelize randomized algorithms.
\item 
A significant challenge is to parallelize Markov chains with super-linear polynomial mixing times into \RNC{} algorithms.
Notable examples include the Jerrum-Sinclair chains for matchings and the ferromagnetic Ising model with no external field~\cite{jerrum2003counting}.
Finding \RNC{} counterparts for these algorithms  remains a major open problem.
\item
An even greater challenge is to give an \NC{} (deterministic) algorithm for approximating counting, 
 especially for graphical models that have unbounded maximum degree. 
\end{itemize}



\newcommand{\etalchar}[1]{$^{#1}$}

\end{document}